\documentclass[12pt]{article}
\usepackage{amsfonts,amsmath,latexsym,amssymb,amsthm}
\usepackage{graphicx}
\usepackage{natbib} %comment out if you do not have the package
\usepackage{url} % not crucial - just used below for the URL 
\usepackage{algorithmic}
\usepackage{graphicx}
\usepackage{textcomp}
\usepackage{xcolor}

 \usepackage{subfigure}
  \usepackage{multirow}
\usepackage[utf8]{inputenc}

\usepackage[]{algorithm2e}

\usepackage{booktabs,tabularx}
\usepackage{dcolumn}

%\pdfminorversion=4
% NOTE: To produce blinded version, replace "0" with "1" below.
\newcommand{\blind}{0}

% DON'T change margins - should be 1 inch all around.
\addtolength{\oddsidemargin}{-.5in}%
\addtolength{\evensidemargin}{-.5in}%
\addtolength{\textwidth}{1in}%
\addtolength{\textheight}{1.3in}%
\addtolength{\topmargin}{-.8in}%

\newcounter{theoremcounter}
\newcounter{lemmacounter}

\newcounter{dummycounter}

\newcounter{corcounter}

\newcounter{emptycounter}
\newcounter{defcounter}

\newtheorem{theorem}[theoremcounter]{Theorem}

\newtheorem{lemma}[lemmacounter]{Lemma}

\newtheorem{corollary}[corcounter]{Corollary}
\newtheorem{remark}{Remark}
\newtheorem{definition}[defcounter]{Definition}

\numberwithin{equation}{section}
\numberwithin{lemmacounter}{section}
\numberwithin{propcounter}{section}
\numberwithin{corcounter}{section}
\numberwithin{conjcounter}{section}
\numberwithin{theoremcounter}{section}
\numberwithin{probcounter}{section}
\numberwithin{conditioncounter}{section}

\newcounter{eqncounter}

\numberwithin{equation}{eqncounter}

\newcommand{\model}{M,K}
\newcommand{\truemodel}{M_0,K_0}
\newcommand{\BICmodel}{M^*,K^*}
\newcommand{\RBICmodel}{\tilde{M},\tilde{K}}

\newcommand{\maxM}{M_{\max}}
\newcommand{\maxK}{K_{\max}}

\DeclareMathOperator*{\argmin}{arg\,min} 

 \usepackage{setspace}
\doublespacing

\begin{document}

%\bibliographystyle{natbib}

%\def\spacingset#1{\renewcommand{\baselinestretch}%
%{#1}\small\normalsize} \spacingset{1}

%%%%%%%%%%%%%%%%%%%%%%%%%%%%%%%%%%%%%%%%%%%%%%%%%%%%%%%%%%%%%%%%%%%%%%%%%%%%%%

\if0\blind
{
  \title{\bf Robust Function-on-Function Regression}
  \author{Harjit Hullait\\
    STOR-i Centre for Doctoral Training, Lancaster University\\
    and \\
    David S. Leslie \\
    Department of Mathematics and Statistics, Lancaster University\\
    and \\
    Nicos G. Pavlidis \\
    Department of Management Science, Lancaster University\\
    and \\
    Steve King \\
    Rolls Royce PLC}
  \maketitle
} \fi

\if1\blind
{
  \bigskip
  \bigskip
  \bigskip
  \begin{center}
    {\LARGE\bf Title}
\end{center}
  \medskip
} \fi

\bigskip
\begin{abstract}

%The FLR model is capable of using information across time thereby giving good model fits

Functional linear regression is a widely used approach to model functional responses with respect to functional inputs. However, classical functional linear regression models can be severely affected by outliers. We therefore introduce a Fisher-consistent robust functional linear regression model that is able to effectively fit data in the presence of outliers. The model is built using robust functional principal component and least squares regression estimators. The performance of the functional linear regression model depends on the number of principal components used. We therefore introduce a consistent robust model selection procedure to choose the number of principal components. Our robust functional linear regression model can be used alongside an outlier detection procedure to effectively identify abnormal functional responses. A simulation study shows our method is able to effectively capture the regression behaviour in the presence of outliers, and is able to find the outliers with high accuracy. We demonstrate the usefulness of our method on jet engine sensor data. We identify outliers that would not be found if the functional responses were modelled independently of the functional input, or using non-robust methods. 

\end{abstract}

\noindent%
{\it Keywords:Robust Functional Data Analysis, Robust Model Selection, Outlier Detection.}
\vfill

% read and check BIC proof
% Apply MLTS as in paper - check results
% Code up robust FPCA - with speed up by applying model to full model
% setup simulation code that could be added to Technometrics paper

% write section on Jet Engine Dataset
% copy and paste methodology section
% add asymptotic results section 
% make results section - applying methodology to different engine parameters. 
% - compare the outliers identified for tpr and tgt and compare to those for residuals
% - are these outliers linked to real problems? 
% - how can this model be used - identify abnormal behaviour? 

\newpage
%\spacingset{2} % DON'T change the spacing!
\section{Introduction}
\label{sec:intro}

% The problem and its importance
% ways people have tackled it
% the way we will approach the problem
% justification for our approach
% the novelty of the paper - robust model, consistent BIC, a new way of looking at outlier detection for sensor data
% Outline the layout of the paper

% Aim of this paper
%In this paper we introduce a robust Functional Linear Regression (FLR) model. We have built this model to effectively capture complex relationships between response curves such as engine temperature and predictor curves such as engine speed. In our scenario we assume the presence of outliers, which may effect model fit. In this paper we are interested in identifying samples in which the functional behaviour between the predictor and response curves deviates from normal. For example given a certain engine acceleration we would expect a certain temperature response, if however the response differs from expectation this could be indicative of engine issues. %The robust model should fit the data more effectively whilst also enabling effective identification of outliers. 

Functional Linear Regression (FLR) in the function-on-function case \citep{tools} is a widely used technique for modelling functional responses with respect to functional inputs. The FLR model is able to capture complex dependency structures as it uses information across time \citep{Morris2015}. However classical FLR models can be severely affected by outliers as we will demonstrate via a simulation study in Section \ref{sec:simulations}. We therefore develop a robust FLR (RFLR) model, which is able to effectively fit the data in the presence of outliers. The model is built using the robust Functional Principal Component model by \cite{bali2011} and the multivariate Least Trimmed Squares (MLTS) estimator by \cite{Agullo2008}. The RFLR model can be used to identify abnormal functional responses, i.e. samples in which the functional behaviour between the predictor and response curves deviate from normal. 

%The robust model gives good predictions of the core data, which we can use to identify outliers.  %The main motivation of the model will be to identify outliers. 

%\textcolor{red}{Our study of FLR is motivated by the need to identify unusual temperature behaviour in jet engine sensor data.} 

Our study of robust FLR is motivated by the need to identify normal relationships in jet engine sensor data when we expect outliers to be present. The data is collected during Pass-Off tests, which are performed on an engine before deployment. In a Pass-Off test a human controller performs manoeuvres, which can be defined as various engine accelerations and decelerations starting and ending at a set idle speed. During the test, data is captured by sensors measuring engine speed, pressure, temperature and vibration in different parts of the engine. One of the key manoeuvres in a Pass-Off test is the Vibration Survey (VS). In this manoeuvre the engine is accelerated slowly to a certain speed then slowly decelerated. We have 199 VS datasets, which include the turbine pressure ratio (TPR) that measures the engine speed, and various temperature features including the turbine gas temperature (TGT). In Figure \ref{fig:wave_engine} we have plots of the TPR and TGT for the first 30 VS manoeuvres. To anonymise the data we have transformed the time index onto the interval $[0,1]$ and the sensor measurements to the range $[0,100]$. %We would like to detect ``surprising'' temperature behaviour in the engine.
 
%% how we are going to approach the problem 
The VS manoeuvres are performed by a human controller, which causes variability in the TPR curves as can be seen in Figure \ref{fig:wave_engine}. This variability will naturally affect the TGT curves and may mask the unusual behaviour produced by the engine. Our results in Section \ref{sec:engine} support this claim, and show that using a direct outlier detection on the engine temperature curves fails to identify meaningful outliers. Instead this approach picks up curves produced by unusual TPR speed profiles. We therefore require a method of detecting outliers in the presence of the controller induced variability. We expect that the relationship between the engine speed and engine temperature for different VS manoeuvres should be the same irrespective of the way the manoeuvre is performed. For example given a certain engine acceleration we would expect a certain temperature response. If however the response differs from expectation this could be indicative of an engine issue. In Section \ref{sec:Outlier_Detection} we will show how RFLR can be used for outlier detection, which is later applied to the jet engine data in Section \ref{sec:engine}, to identify abnormal behaviour.

The paper is organised as follows. In Section \ref{sec: Classical Functional Data Analysis} we outline the classical FLR model. In Section \ref{sec:Robust_FLR}, we will outline robust Functional Data Analysis (FDA) techniques to obtain a robust FLR model. We also introduce a robust model selection procedure. In Section \ref{sec:Asymptotic_Results} we prove consistency results for the robust FLR model and the robust model selection procedure. In Section \ref{sec:Outlier_Detection}, we describe an outlier detection method, which acts on the residuals of the robust FLR model. In Section \ref{sec:simulations} we perform a simulation study to illustrate the model fitting and outlier detection capabilities of the robust model. In Section \ref{sec:engine} we apply the robust model on the engine data and highlight unusual observations that can not be detected by using outlier detection directly on the temperature curves. Finally in Section \ref{sec:conc} we provide a conclusion. % and outline further directions.   

%We will compare the outlier detection capabilities using the residuals from robust and classical FLR against using the curves directly. We shall also compare the robust model selection procedure against standard approaches used in the FLR literature.

\begin{figure}
\centering     %%% not \center

\subfigure{\label{fig:wave_TPR}\includegraphics[width=8cm]{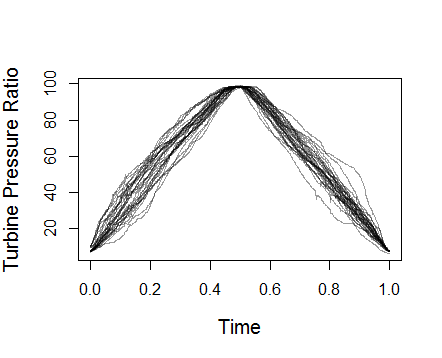}}
\subfigure{\label{fig:wave_tgt}\includegraphics[width=8cm]{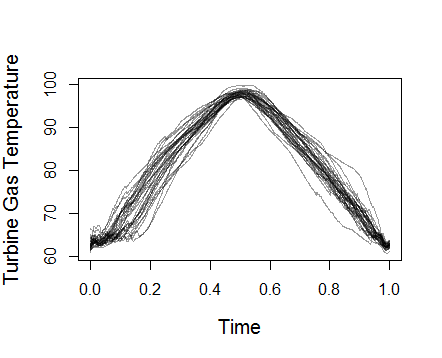}}
%\caption{Plots of the tpr and tgt time series.}
  \caption{Plots of the first 30 TPR and TGT time series.}
\label{fig:wave_engine}
\end{figure}

% Explain data collection - Pass-Off tests, EMS system etc
% Interesting features - lag effect of temp wrt speed
% High-volume, functional (comes from the same manoeuvre process)
% The manoeuvres have a range of lengths
% some parameters have two initial behaviours, due to some manoeuvres performed after an engine stop

%\section{Methodology}
%\label{sec:methodology}

% Use Robuct FLR to model temperature and engine speed behaviour
% Use RBIC to select parameters in FLR model
% Apply depth-based outlier detection algorithm on the residuals
% Use h-modal depth as it is able to classify outliers in the presence of multiple normal behaviours. 

% introduction to section - introduce Functional Data Analysis techniques which we will be using. 

\section{Classical Functional Data Analysis}\label{sec: Classical Functional Data Analysis}

In this section we give a brief summary of the FDA tools that we will later apply in our model. In the following sections we will use the vector space $L^2(I)$ which is the Hilbert space of square integrable functions on the compact interval $I$ with the inner product $\langle f,g \rangle =\int_{I}f(t)g(t)dt$ for functions $f,g\in L^2(I)$. 

We will define $X(t), Y(t)$ to be univariate stochastic processes defined on $I$, with mean functions $\mu^X(t)$ and $\mu^Y(t)$, and covariance functions $C_X(s,t)=cov\{X(s),X(t)\}$ and $C_Y(s,t)=cov\{Y(s),Y(t)\}$ for all $s,t\in I$. We shall define $x(t)=[x_{1}(t),...,x_{n}(t)]$ and $y(t)=[y_{1}(t),...,y_{n}(t)]$ to be $n$ independent and identically distributed realisations of $X(t)$ and $Y(t)$ respectively. %These functions will be used throughout the paper. 

In practice we observe $x_i(t)$ and $y_i(t)$ at discrete time points. We shall assume for simplicity of exposition that observations are made at equally spaced time points $t_1,...,t_T$. We will outline Functional Linear Regression and Functional Principal Component Analysis with respect to the underlying functions $x(t), y(t)$. In Section \ref{sec:Model_select} we need to use the discretely observed data to define a suitable model selection criterion.

\subsection{Functional Linear Regression}\label{sec:FLR}

% FLR model
% Use FPCA basis - descibe

In this section we will introduce the classical FLR model \citep{tools}. In FLR we model the relationship between predictor $x_i(t)$ and response $y_i(t)$ as: 
 
\begin{equation}\label{eq:linear}
y_{i}(t)=\alpha(t) + \int_{I}x_{i}(s)\beta(s,t)ds +\epsilon_i(t),
\end{equation}
where $\alpha(t)$ is the intercept function, $\beta(s,t)$ is the regression function and $\epsilon_i(t)$ is the error process. For a fixed $t$, we can think of $\beta(s,t)$ as the relative weight placed on $x_i(s)$ to predict $y_i(t)$. As in \citet{CHIOU2016} we will assume the mean functions $\mu^X(t)=0$ and $\mu^Y(t)=0$ which thereby means $\alpha(t)=0$. This is a reasonable assumption as in practice we can calculate the mean functions $\mu^X(t)$ and $\mu^Y(t)$ efficiently for dense data and then pre-process the data by subtracting $\mu^X(t)$ and $\mu^Y(t)$ from the observed curves.  

%We will assume $\mu^X(t)=E[X(t)]=0$ for $\forall t$ as this simplifies the calculations. 
%FLR in the function-on-function case was first developed by Ramsay and Dalzell \cite{tools}. This is a well studied model. 

FLR in the function-on-function case can be modelled parametrically \citep{yaofang2005, CHIOU2016} or nonparametrically \citep{FERRATY2012, Ivanescu2015, Scheipl2015}. We use a parametric approach which models the regression matrix in terms of pre-defined basis functions.  %The basis approach will be used as it is computationally efficient. \textbf{Do I need more of a justification}  

We will represent $x_i(t)$ and $y_i(t)$ in terms of $(M,K)$ pre-chosen basis functions $\phi^X_j(t),\phi^Y_j(t)$ respectively:

\begin{equation*}
x_i(t)=\sum_{m=1}^{M} z_{im}\phi^X_m(t) \text{ and } y_i(t)=\sum_{k=1}^{K} w_{ik}\phi^Y_k(t).
\end{equation*}
where $z_{im}, w_{ik}\in\mathbb{R}$.
%For notational simplicity we have assumed that $x_i(t)$ and $y_i(t)$ can be represented by the same number of functions $M$, however this assumption can be easily relaxed.
 
We define $\phi^X(t)=[\phi^X_1(t),...,\phi^X_M(t)]$, $\phi^Y(s)=[\phi^Y_1(s),...,\phi^Y_K(s)]$, $z_i=[z_{i1},...,z_{iM}]$ and $w_i=[w_{i1},...,w_{iK}]$. We will then model the regression surface using a double basis expansion \citep{Ramsay2005}:
  
\begin{equation}\label{eq:double_basis}
\beta(s,t)=\sum_{m=1}^{M}\sum_{k=1}^{K}b_{mk}\phi^X_{m}(s)\phi^Y_{k}(t)=\phi^X(s)^{T}B\phi^Y(t),
\end{equation}
for an $M\times K$ regression matrix $B$. We can then write:
 
\begin{equation}\label{eq:basis_expand}
y_i(t)=z_iB\phi^Y(t)+\epsilon_i(t).
\end{equation}

%%%%%%%%%%%%%%%%%%%%%%%%%%%%%%%%%%%%%%%%%%%%%%%%%

Letting $\epsilon_i(t)=q_i\phi^Y(t)$ we can reduce Equation \eqref{eq:basis_expand} to:

\begin{equation}\label{eq:LS}
w_i=z_iB+q_i.
\end{equation}
This parametrisation of the residual function is also used by \cite{CHIOU2016}. We can then estimate $B$ using standard multivariate regression methods typically assuming Gaussian $q_i$.
 
%We will use a data-driven set of basis functions $\phi^X(t)$ and $\phi^Y(t)$, which we will obtain using Functional Principal Component Analysis.

\subsection{Functional Principal Component Analysis}\label{sec:FPCA}

%For the FLR model described in Section \ref{sec:FLR} we need to choose appropriate basis functions $\phi^X(t)$ and $\phi^Y(t)$. 

In this section we describe Functional Principal Component Analysis (FPCA), which we will use to build data-driven basis functions $\phi^X(t)$ and $\phi^Y(t)$ for $x_i(t)$ and $y_i(t)$, respectively. These basis functions give effective, low-dimensional representations and will be used in the Functional Linear Regression model described in Section \ref{sec:FLR}. 

Functional Principal Component Analysis (FPCA) is a method of finding dominant modes of variance for functional data. These dominant modes of variance are called the Functional Principal Components (FPCs). FPCA is also used as a dimensionality reduction tool, as a set of observed curves can be effectively approximated by a linear combination of a small set of FPCs. %These FPCs form an orthonormal basis over $L^2(I)$ \citep{Shang2014}. %

The FPCs, $\phi^X_{m}(t)$ for $m=1,2,...$, are the eigenfunctions of the covariance function $C_X(s,t)$ with eigenvalues $\lambda^X_m$. Note that the eigenfunctions are ordered by the respective eigenvalues. The Karhunen-Lo\'{e}ve theorem \citep{Shang2014} shows that $x_i(t)$ can be decomposed as $x_{i}(t)=\sum_{m=1}^{\infty}z_{im}\phi^X_{m}(t)$ where the principal component score $z_{im}=\int_{I}x_{i}(t)\phi^X_{m}(t)dt$. The scores $z_{im}$ are realisations from a random variable $\xi^X_m$. % with the properties $E(\xi^X_m)=0$, $Var(\xi^X_m)=\lambda^X_m$ and $E(\xi^X_m\xi^X_l)=0$ for all $m\neq l$. 

We can define the $M$-truncation as
\begin{equation}\label{eigenapprox}
\hat{x}^{M}_{i}(t)=\sum_{m=1}^{M}z_{im}\phi^X_{m}(t),
\end{equation}
which gives the minimal residual error:

\begin{equation}\label{fit}
\frac{1}{n}\sum_{i=1}^{n}||x_{i}-\hat{x}^{M}_{i}||^2=\frac{1}{n}\sum_{i=1}^{n}\int_{I} [x_{i}(t)-\hat{x}^{M}_{i}(t)]^2 dt.
\end{equation}
To choose $M$ we will use an information criterion outlined in Section \ref{sec:Model_select}. An analogous procedure is used to find $K$ eigenfunctions $\phi_k^Y(t)$ for $y(t)$. 
 
%We can think of FPCA as capturing the largest modes of variance. This means that $\{\phi^X_1(t),...,\phi^X_M(t)\}$ will capture more information than any other $M$ basis functions. %, which is a useful dimensionality reduction property. 

 \subsection{Bayesian Information Criterion for FLR}\label{sec:Model_select}

%In this section we will formulate a Bayesian Information Criterion (BIC) to determine basis size $M$ similar to Matsui \cite{matsui2017}. We require a suitable likelihood, which we will define with respect to the observed data instead of the coefficients in Equation \eqref{eq:LS}. As small $M$ can result in well fitting coefficients in \eqref{eq:LS} but a poor description of the data. 

In the FLR model described in Section \ref{sec:FLR} we need to choose terms $M$ and $K$. Typically $M$ and $K$ are chosen independently \citep{yaofang2005},  however the estimation of $\beta(s,t)$ also depends on $M$ and $K$ and this should be incorporated into the estimation of these terms. In this section we formulate a Bayesian Information Criterion (BIC) to determine the basis size $M$ and $K$, similarly to \cite{matsui2017}.  

A component of the BIC is the log likelihood, often expressed as a squared error term. It is tempting to use the squared error resulting from Equation \eqref{eq:LS}. However the objective is to fit the data $y_i$ so we should use a likelihood of this data instead of a squared error term of basis coefficients.

We have a set of models $J=\{(\model)| M=1,...,\maxM, K=1,...,\maxK\}$, where $\maxM$ and $\maxK$ are pre-set maximum number of FPCs that will be considered in the model. Let vector $\vec{y}_i$ be the values of $y_i(t)$ evaluated at discrete time points: $\vec{y}_i=[y_i(t_1),...,y_i(t_{T})]$. Let $z_i^{(M)}$ be the first $M$ principal scores of $x_i(t)$ with respect to the FPCs $\phi^X(t)$ and let $\phi^{(K)}$ be the matrix with $(k,i)$ entry $\phi^Y_k(t_i)$. We assume there exists a true model $(M_0,K_0)$ with associated $M_0\times K_0$ matrix $B^{M_0, K_0}$ such that 

 \begin{equation}\label{eq:basis_expand_discrete}
 \vec{y}_i = (z^{(M_0)}_i)^T B^{\truemodel} \phi^{(K_0)} +\epsilon_i,
\end{equation} 
where the error $\epsilon_i=[\epsilon_i(t_1),...,\epsilon_i(t_T)]$ is assumed for simplicity to be sampled from $N(0,v^2I_T)$, where $I_T$ is the identity matrix of size $T$. 

For Model $(\model)$ we define $\theta^{\model}=(B^{\model},v^{\model})$ and the prediction $\hat{y}_i^{\model}=(z^{(M)}_i)^T B^{\model} \phi^{(K)}$. We want to identify this true model $(\truemodel)$, which we can use to obtain consistent estimates of $\theta^{\truemodel}$. 

 For Model $(\model)$ we can define the likelihood for sample $i$ as

\begin{equation}
f(\vec{y}_i|\theta^{\model})=\frac{1}{(2\pi)^{\frac{T}{2}}(v^{\model})^{T}}\exp\left\{-\frac{[\vec{y}_i-\hat{y}_i^{\model}]^T[\vec{y}_i-\hat{y}_i^{\model}]}{2(v^{\model})^2}\right\},
\end{equation} 
and the log-likelihood $l(\theta^{\model})=\sum_{i=1}^n \log(f(\vec{y}_i|\theta^{\model}))$. As in \citet{eilers1996} 
 \begin{equation}\label{eq:BIC}
 BIC_n(\model) = -2l(\hat{\theta}^{\model})+w(\model)\log(n)
 \end{equation}
 where $\hat{\theta}^{\model}$ is the maximum likelihood estimator and the penalty $\omega(\model)=MK+1$, in which $MK$ is the number of free parameters in the model and the 1 comes from $v$. We will denote $(\BICmodel)_n=\argmin_{(\model)\in J}BIC_n(\model)$, which is dependent on the sample size $n$. 
 
To summarise, we estimate the FPCs for $X$ and $Y$ and solve the FLR model for different models $(\model)$. We then choose model $(\BICmodel)_n$ that minimises the BIC criterion. The robust equivalent of this procedure is given in Algorithm \ref{alg:Robust_FLR}.  
 
\section{Robust Functional Linear Regression}\label{sec:Robust_FLR}

% describe parts of the FLR model
In Section \ref{sec: Classical Functional Data Analysis} we have defined the FLR model and have outlined the use of FPCA bases to estimate parameters of the model. In this section we will introduce robust versions of the FDA techniques outlined in Section \ref{sec: Classical Functional Data Analysis}. This will allow us to fit a normality model even in the presence of outliers. We shall also propose a robust BIC procedure for model selection.   

We will replace classical FPCA with robust FPCA estimates by \cite{bali2011} which ensure that outliers do not unduly affect the FPCA estimates. Note that FPCA minimises the residual error given in \eqref{fit}. To obtain robust FPCA estimates \cite{bali2011} minimise a robust scale estimator, using a projection pursuit approach, which iteratively performs a weighted least squares till the estimators stabilise.   

Analogous to \eqref{eigenapprox}, the robust FPCs $\tilde{\phi}^X_m(t)$ ($m=1,...,M$) and $\tilde{\phi}^Y_k(t)$ ($k=1,...,K$) are orthonormal functions such that 

\begin{equation*}\label{eigenapprox_rob}
x_{i}(t)\approx\sum_{m=1}^{M}\tilde{z}_{im}\tilde{\phi}^X_{m}(t),\quad y_{i}(t)\approx\sum_{k=1}^{K}\tilde{w}_{ik}\tilde{\phi}^Y_{k}(t),
\end{equation*} 
are good approximations for $x_i(t)$ and $y_i(t)$.

We define $\tilde{y}_i(t)=\tilde{w}_i\tilde{\phi}^Y(t)$ and assume as in \eqref{eq:LS} that $\epsilon_i=\tilde{q}_i\tilde{\phi}^Y(s)$. We can now write the robust counterpart of \eqref{eq:LS} as

\begin{equation}\label{eq:LS_rob}
\tilde{w}_i=\tilde{z}_i\tilde{B} + \tilde{q}_i.
\end{equation}

To obtain a robust estimate of the regression matrix $\tilde{B}$, we will use the Multivariate Least Trimmed Squares (MLTS) estimator by \cite{Agullo2008}, to mitigate the affect of outliers with respect to the regression relationship. Given $\alpha\in[0,1]$ we can define $r=[\alpha n]$ as the $\alpha$ proportion of samples rounded to the nearest integer, and the set $\mathcal{S}=\{S\subset\{1,...,n\}, |S|=r\}$. The objective of MLTS is to find a subset $S$ such that

\begin{equation*}
S=\argmin_{S\in\mathcal{S}}\sum_{i\in S}||\tilde{w}_i-\tilde{z}_i\tilde{B}||^2.
\end{equation*}
This is robust as outliers will not be in the subset by definition so shall not affect the model estimation. We will choose a subset of size $r= [0.8n]$.

%To improve the efficiency of the MLTS estimator, \cite{Agullo2008} suggest a one-step reweighting. They calculate the Mahalanobis distance using the MLTS estimator and then derive a threshold assuming the null case of the Mahalanobis distances being $\chi^2$ distributed. Samples that have a Mahalanobis distance less than the threshold are included into the new estimation of the regression matrix $\hat{B}$. 

% why have we chosen to use trimming for the regression problem and then an S-estimator for the FPCA estimators?

\subsection{Robust Bayesian Information Criterion for FLR}\label{sec:RBIC}

The BIC model selection method is known to be non-robust \citep{Machado1993}. In particular outliers can significantly affect the loglikelihood estimation. We therefore outline a robust BIC (RBIC) model, which, similar to MLTS, maximises over a subset of samples $S$. RBIC can therefore give good model selection performance in the presence of outliers. 

We will define $\tilde{\theta}^{\model}=(\tilde{B}^{\model},\tilde{v}^{\model})$ as robust estimated parameters for model $(\model)$ and the robust prediction $\tilde{y}_i^{\model}=(\tilde{z}^{(M)}_i)^T \tilde{B}^{\model} \tilde{\phi}^{(K)}$. We define the trimmed likelihood for model $(\model)$ and set $S$ as  

\begin{equation}\label{eq:rob_like}
\tilde{l}(\tilde{\theta}^{\model},S)=\sum_{i\in S}\left(\frac{[\vec{y}_i-\tilde{y}_i^{\model}]^T[\vec{y}_i-\tilde{y}_i^{\model}]}{(\tilde{v}^{\model})^2}\right)+rT\log(2\pi)+2rT\log(\tilde{v}^{\model}).
\end{equation}

%where $\hat{y}_i=(z^{(m)}_i)^T B_{\model} \phi^{(k)}$.

We will define $S^{\model} = \argmin_{S\in\mathcal{S}}\tilde{l}(\tilde{\theta}^{\model},S)$, where $\mathcal{S}=\{S\subset\{1,...,n\}, |S|=r\}$ for $r= [0.8n]$. Then 
\begin{align}\label{eq:RBIC}
RBIC_n(\model)  & = -2\min_{S\in\mathcal{S}}\tilde{l}(\tilde{\theta}^{\model},S)+\omega(\model)\log(r) \\
& = -2\tilde{l}(\tilde{\theta}^{\model},S^{\model})+w(\model)\log(r)
\end{align}

We will denote $(\RBICmodel)_n=\argmin_{(\model)\in J}RBIC_n(\model)$, and we will assume that this minimum is unique. 

%In this formulation we must minimise over all subsets $S\in\mathcal{S}$ for each model $(\model)$, this set $S$ will be equal to the set of samples with minimum Mahalanobis distance for each model. Note for different models the choice of subset $S$ may differ. By definition the outliers will not be in this set for the true model, however this is not necessarily the case for the other models. 

In Algorithm \ref{alg:Robust_FLR} we outline the calculation of the robust FLR model, which incorporates the RBIC procedure. In the algorithm we estimate the model for different values of $(\model)$ and choose the model with the minimum RBIC value. We consider $M=1,…,\maxM$ and $l=1,...,\maxK$ where $\maxM,\maxK$ are chosen to ensure that $99.99\%$ of the variance in the raw data is captured.

 %We consider $M=1,...,P$ where $P$ is chosen to ensure $99.99\%$ of the variance is captured.  

%To reduce the computational cost we will set an upper limit $P$ of the number of models 

%The choice of $P$ can be made to ensure a certain percentage of variance is captured.  

\begin{algorithm}
 \KwData{Let $(x_i,y_i)$ be mean-corrected time series of length $T$ for $i=1,...,n$.}
 \begin{algorithmic} 
 \STATE 1. Estimate $\{\tilde{\phi}^X_1(t),...,\tilde{\phi}^X_{\maxM}(t)\}$, $\{\tilde{\phi}^Y_1(t),...,\tilde{\phi}^Y_{\maxK}(t)\}$ \citep{bali2011}.
  \FOR {$M=1,...,\maxM$}
    \FOR {$K=1,...,\maxK$}
      \STATE Estimate the regression matrix $B^{\model}$ using MLTS \citep{Agullo2008}.
      \STATE Obtain the $RBIC_n(\model)=\argmin_{(\model)\in J}RBIC_n(\model)$ value using \eqref{eq:RBIC} 
    \ENDFOR 
\ENDFOR
\STATE 2. Select model $(\RBICmodel)_n$.
\RETURN Regression matrix $\tilde{B}$ from model $(\RBICmodel)_n$ and $\{\tilde{\phi}^X_1(t),...,\tilde{\phi}^X_{\tilde{M}}(t)\}$, $\{\tilde{\phi}^Y_1(t),...,\tilde{\phi}^Y_{\tilde{K}}(t)\}$.
\end{algorithmic}
 \caption{Robust FLR procedure}
 \label{alg:Robust_FLR}
\end{algorithm}

\section{Asymptotic Results}
\label{sec:Asymptotic_Results}

% consistency

In Section \ref{sec:Robust_FLR} we proposed a Robust FLR model for the function-on-function problem. A minimum criteria for a good model is consistency, i.e. that given an ideal scenario of unlimited data that the estimator will be equal or arbitrarily close to the truth. In this section we shall prove consistency and Fisher-consistency for the robust FLR model. We shall follow a similar approach to \cite{Kalogridis2018} who developed a robust FLR model for the scalar-on-function problem. We shall also prove the consistency of the RBIC model selection method outlined in Section \ref{sec:Model_select}.  

%The consistency result show that as we obtain more samples the estimated FLR model becomes infinitesimally close to the true FLR model.  

%define fisher consistency and consistency

\begin{definition}\label{def:consistency}
Let $X_1,X_2,..., X_n$ be a sequence of real-valued random variables. An estimator $T_n:=T(X_1,X_2,..., X_n)$ of a parameter $\theta$ is said to be  (asymptotically) \textbf{consistent} if for all $\epsilon>0$

\begin{equation*}
\lim_{n\rightarrow\infty}P(|T_n-\theta|>\epsilon)=0.
\end{equation*} 
%We will use $T_n\inp\theta$ to denote consistency.  
\end{definition}

\begin{definition}\label{def:Fisher}
Let $X_1,X_2,..., X_n$ be a sequence of real-valued random variables with an associated cumulative distribution function $F_{\theta}$, which depends on an unknown parameter $\theta$. Let the estimator $T_n:=T(F_n)$ of a parameter $\theta$, be a function of the empirical distribution function $F_n$. We say this estimator is \textbf{Fisher-consistent} for the parameter $\theta$ if 

\begin{equation*}
T(F_{\theta})=\theta
\end{equation*}
\end{definition}

\begin{remark}\label{remark:consist}
Fisher consistency is equivalent to (asymptotic) consistency if the empirical distribution function $F_n$ converges pointwise to the true distribution function $F_\theta$. This can be shown to be the case for iid real multivariate random variables using the Glivenko-Cantelli theorem \citep{pollard2012}.

\end{remark} 
%MM-estimator by \cite{Kudraszow2011}. 

%the MLTS estimator \cite{Agullo2008}.

\subsection{Consistency of the Robust FLR}

% introduction? 

To prove Fisher-consistency we need to define appropriate probability measures on the predictor $X(t)$, response $Y(t)$ and the residual $\epsilon(t)$. We will then define conditions by which the robust FPCA and MLTS regression are Fisher-consistent, which will then ensure the Fisher-consistency of $\tilde{\beta}(s,t)$. We shall also prove consistency of $\tilde{\beta}(s,t)$ using Remark \ref{remark:consist}. Following the ideas set by  \cite{Kalogridis2018}, we make 6 assumptions:

\begin{itemize}
\item[(C1)] $X$ has a finite-dimensional Karhunen-Lo\'{e}ve decomposition, i.e $\lambda^X_{m}=0$ for $m>M_0$.
\item[(C2)] $Y$ has a finite-dimensional Karhunen-Lo\'{e}ve decomposition, so $\lambda^Y_{k}=0$ for $k>K_0$.
\item[(C3)] The residual $\epsilon(t)=\tilde{q}\tilde{\phi}^Y(t)$ where $\tilde{q}$ is a Gaussian random variable with mean $0$ and covariance matrix $\Sigma$. 
\item[(C4)] $\beta(s,t)$ lies in a linear subspace spanned by $\{\tilde{\phi}^X_m\}_{m=1}^{M_0}$ and $\{\tilde{\phi}^Y_k\}_{k=1}^{K_0}$.
\item[(C5)] The random variables $\{\tilde{\xi}^X_j\}_{j=1}^{M_0}$ are absolutely continuous and have joint density $g_1(x)$ satisfying $g_1(x)=h_1(||x||_E)$ for $x\in\mathbb{R}^{M_0}$ and some measurable function $h_1:\mathbb{R}\rightarrow\mathbb{R}_{+}$.
\item[(C6)] The random variables $\{\tilde{\xi}^Y_j\}_{j=1}^{K_0}$ are absolutely continuous and have joint density $g_2(y)$ satisfying $g_2(y)=h_2(||y||_E)$ for $y\in\mathbb{R}^{K_0}$ and some measurable function $h_2:\mathbb{R}\rightarrow\mathbb{R}_{+}$.

%\item[(C7)] The regression terms $(z_i,c_i)$ are iid.
%F is absolutely continuous and has a density $f$ that is even, decreasing in $|x|$ and strictly decreasing in a neighbourhood of zero. 
\end{itemize}
We define $||\cdot||_E$ as the Euclidean norm.

Let $P_X$ be the image measure of $X$ i.e. $P_X(U)=P(X\in U)$ for a Borel set $U$, and likewise for $P_Y$. We can define the cumulative distribution functions 

\begin{align*}
&F_X(a_1,...,a_{M_0}):=P_X(\tilde{\xi}^X_1\leq a_1,...,\tilde{\xi}^X_{M_0}\leq a_{M_0}), \\
&F_Y(b_1,...,b_{K_0}):=P_Y(\tilde{\xi}^Y_1\leq b_1,...,\tilde{\xi}^Y_{K_0}\leq b_{K_0}). 
\end{align*}

%as $X$ and $Y$ can be defined by the score vectors.

%We make the assumption that $\epsilon(t)$ can be expressed as a finite sum of eigenfunctions $\phi^Y_k$. 

%In the scalar-on-function problem studied by \citet{Kalogridis2018}, $Y,\epsilon\in\mathbb{R}$ so they don't require this assumption.

Let $F_{\epsilon}$ denote the distribution function of $\epsilon(t)$, which can be defined in the same way as $F_X$ and $F_Y$. We can write the functional of the robust estimator $\tilde{\beta}(s,t)$ as:

\begin{equation}
\tilde{\beta}(F_{\epsilon},F_X,F_Y)(s,t)=\sum_{k=1}^{K_0}\sum_{m=1}^{M_0}\hat{B}_{km}(F_{\epsilon},F_X,F_Y)\tilde{\phi}^X_{m}(F_X)(s)\tilde{\phi}^Y_{k}(F_Y)(t).
\end{equation}

The functional is Fisher-consistent if $\tilde{\beta}(F_{\epsilon},F_X,F_Y)(s,t)=\beta(s,t)$ for $s,t\in I$, which in turn follows from $\tilde{B}_{km}(F_{\epsilon},F_X,F_Y)=B_{km}$, $\hat{\phi}^Y_{k}(F_Y)(t)=\phi^Y_{k}(t)$ and $\hat{\phi}^X_{m}(F_X)(t)=\phi^X_{m}(s)$. Conditions C1-C4 are to ensure the FLR problem can be defined by a finite number of terms. \cite{Kalogridis2018} show that Conditions C5 and C6 are sufficient for the Fisher-consistency of the robust FPCA estimators by \cite{bali2011}.

\begin{lemma}\label{lemma:Fisher}
Assume C1-C6 holds then $\tilde{\beta}(F_{\epsilon},F_X,F_Y)(s,t)$ is Fisher-consistent. 
\end{lemma}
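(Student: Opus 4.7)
The plan is to follow the decomposition suggested immediately before the lemma: since
\[
\tilde{\beta}(F_{\epsilon},F_X,F_Y)(s,t)=\sum_{k=1}^{K_0}\sum_{m=1}^{M_0}\tilde{B}_{km}(F_{\epsilon},F_X,F_Y)\,\tilde{\phi}^X_{m}(F_X)(s)\,\tilde{\phi}^Y_{k}(F_Y)(t),
\]
Fisher-consistency of $\tilde{\beta}$ reduces to Fisher-consistency of its three ingredients: the two robust eigenfunction functionals and the MLTS functional. I would therefore prove the eigenfunction step and the regression-matrix step separately, and then combine them with the expansion of $\beta$ given by (C4).

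For the eigenfunction step, (C1) and (C2) make the Karhunen--Lo\`{e}ve expansions of $X$ and $Y$ genuinely finite-dimensional, so only $M_0$ and $K_0$ eigenfunctions need to be recovered. Conditions (C5) and (C6) impose elliptical symmetry on the joint distributions of the principal scores, which is exactly the hypothesis under which \cite{bali2011} establish Fisher-consistency of their projection-pursuit robust FPCA functionals (as pointed out by \cite{Kalogridis2018} and recalled in the paragraph preceding the lemma). Invoking that result directly yields $\tilde{\phi}^X_{m}(F_X)(s)=\phi^X_{m}(s)$ for $m=1,\dots,M_0$ and $\tilde{\phi}^Y_{k}(F_Y)(t)=\phi^Y_{k}(t)$ for $k=1,\dots,K_0$ (up to a sign that can be absorbed into the corresponding score).

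The substantive step, and the one I expect to be the main obstacle, is Fisher-consistency of the MLTS regression functional. Once the robust eigenfunctions have been shown to equal the true ones, the robust scores $(\tilde z,\tilde w)$ coincide with the true scores $(z,w)$, so by (C3)--(C4) equation \eqref{eq:LS_rob} becomes the exactly-specified multivariate linear model
\[
w = zB + q,\qquad q\sim N(0,\Sigma),
\]
with $q$ independent of $z$ and with $\mathrm{Cov}(z)$ positive definite by (C1) and (C5). The MLTS population functional is defined through the minimisation
\[
\tilde{B}(F_{\epsilon},F_X,F_Y)=\argmin_{B}\, E_{H_B}\!\bigl[\,\|w-zB\|_E^2\,\bigr],
\]
where $H_B$ denotes the sub-distribution formed by the $\alpha$-fraction of $(z,w)$ with smallest squared residual at $B$. \cite{Agullo2008} prove Fisher-consistency of this functional under symmetry of the error distribution and elliptical symmetry of the regressors, both of which are guaranteed here by (C3) and (C5). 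Applying their theorem gives $\tilde{B}_{km}(F_{\epsilon},F_X,F_Y)=B_{km}$ for all $m,k$; the point requiring care is verifying that the trimmed population objective identifies $B$ uniquely, which follows from the independence of $q$ and $z$ together with the non-degeneracy of $\mathrm{Cov}(z)$.

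Combining the two pieces by substitution,
\[
\tilde{\beta}(F_{\epsilon},F_X,F_Y)(s,t)=\sum_{k=1}^{K_0}\sum_{m=1}^{M_0}B_{km}\phi^X_{m}(s)\phi^Y_{k}(t)=\beta(s,t)
\]
for every $s,t\in I$, where the last equality is exactly the double basis expansion provided by (C4). This completes the Fisher-consistency argument. Consistency of $\tilde\beta$ in the sense of Definition~\ref{def:consistency} then follows from Remark~\ref{remark:consist} applied to the finite-dimensional score vectors, since under (C1)--(C2) the relevant empirical distributions are those of iid random vectors in $\mathbb R^{M_0}$ and $\mathbb R^{K_0}$ and therefore converge to $F_X$, $F_Y$ by Glivenko--Cantelli.
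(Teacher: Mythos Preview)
Your proposal is correct and follows essentially the same route as the paper: first invoke \cite{bali2011} (via C1--C2, C5--C6) for Fisher-consistency of the robust eigenfunctions, then reduce the functional regression to the finite-dimensional model $w=zB+q$ and apply the MLTS Fisher-consistency result of \cite{Agullo2008} under (C3), before reassembling $\tilde\beta$ through the double basis expansion (C4). The paper carries out the reduction to the multivariate model by an explicit calculation (integrating $X(s)\tilde\beta(s,t)$ against $\tilde\phi^Y$), whereas you argue more directly via identification of the scores, but the logical structure and the external results invoked are identical; note also that your final paragraph on Glivenko--Cantelli really proves Corollary~\ref{corollary:beta} rather than the lemma itself.
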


\begin{proof}

Conditions C1-C2 and C5-C6 ensure Fisher-consistency of the robust FPCA estimators as shown by \cite{bali2011}, so $\tilde{\phi}^Y(F_Y)(t)=\phi^Y(t)$ and $\tilde{\phi}^X(F_X)(t)=\phi^X(t)$. By conditions C1-C2 we can write 

\begin{equation*}
Y(t)=c\tilde{\phi}^Y(F_Y)(t), \quad X(t)= Z\tilde{\phi}^X(F_X)(t)
\end{equation*}

Then 

\begin{align*}
\int_{I}X(s)\tilde{\beta}(F_{\epsilon},F_X,F_Y)(s,t)ds &= \int_{I}Z\tilde{\phi}^X(F_X)(s)\tilde{\phi}^X(F_X)(s)^{T}\tilde{B}(F_{\epsilon},F_X,F_Y)\tilde{\phi}^Y(F_Y)(t)ds \text{ using C4}\\
& = Z\tilde{B}(F_{\epsilon},F_X,F_Y)\tilde{\phi}^Y(F_Y)(t). 
\end{align*}
Using condition C3 we can write $\epsilon(t)=\tilde{q}\tilde{\phi}^Y(t)$ therefore 

\begin{equation*}
Z\tilde{B}(F_{\epsilon},F_X,F_Y)\tilde{\phi}^Y(F_Y)(t) + \epsilon(t) = Z\tilde{B}(F_{\epsilon},F_X,F_Y)\tilde{\phi}^Y(F_Y)(t) + \tilde{q}\tilde{\phi}^Y(F_Y)(t),
\end{equation*}
multiplying by $\tilde{\phi}^Y(F_Y)(t)$ and integrating over $t$ we obtain 

\begin{equation*}
Z\tilde{B}(F_{\epsilon},F_X,F_Y)+\tilde{q}.
\end{equation*}
\cite{Agullo2008} show that Condition C3 implies the MLTS estimator is Fisher-consistent so $\tilde{B}(F_{\epsilon},F_X,F_Y)=B$. Therefore $\tilde{\beta}(F_{\epsilon},F_X,F_Y)(s,t)ds=\beta(s,t)$. 
 
\end{proof}

\begin{corollary}\label{corollary:beta}
If $\{x_1(t),y_1(t)\},...,\{x_n(t),y_n(t)\}$ are iid samples with cumulative distribution function $(F_X,F_Y)$. Then, assuming C1-C6 holds, $\tilde{\beta}(s,t)$ is consistent. 
\end{corollary}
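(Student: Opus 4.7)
The plan is to deduce consistency from Fisher-consistency by combining Lemma \ref{lemma:Fisher} with Remark \ref{remark:consist}. Lemma \ref{lemma:Fisher} establishes that the functional $\tilde{\beta}(F_{\epsilon},F_X,F_Y)$ recovers the true regression surface $\beta$ when evaluated at the population distributions. Remark \ref{remark:consist} tells us that Fisher-consistency upgrades to asymptotic consistency whenever the empirical distributions converge pointwise to the population distributions, a convergence that is guaranteed for iid multivariate random variables via Glivenko--Cantelli. The task therefore reduces to (i) establishing the applicability of Glivenko--Cantelli in our functional setting, and (ii) verifying that $\tilde{\beta}$ is a continuous functional of its three arguments, so that convergence of the empirical distributions transfers through.

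First I would use conditions C1 and C2 to reduce the problem to a finite-dimensional one. Since $X$ and $Y$ admit finite Karhunen--Lo\'{e}ve decompositions of orders $M_0$ and $K_0$, the law of each functional sample is fully determined by the joint law of its $(M_0+K_0)$-dimensional vector of scores. Thus, given $n$ iid pairs $\{(x_i,y_i)\}$, the multivariate Glivenko--Cantelli theorem guarantees that the empirical cumulative distribution functions $F_{X,n}$ and $F_{Y,n}$ converge uniformly, and in particular pointwise, to $F_X$ and $F_Y$ almost surely. An analogous statement for $F_{\epsilon,n}$ follows once residuals are formed from the plug-in estimators.

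Next I would invoke continuity of the functional $\tilde{\beta}(\cdot,\cdot,\cdot)$ in its three arguments. This splits into continuity of the robust FPCA functionals $\tilde{\phi}^X(F_X)$ and $\tilde{\phi}^Y(F_Y)$ at $F_X,F_Y$ (supplied by \cite{bali2011} under C5--C6) and continuity of the MLTS coefficient functional $\tilde{B}(F_{\epsilon},F_X,F_Y)$ at the population argument (supplied by \cite{Agullo2008} under C3). Composing these in the double-basis representation $\tilde{\beta}=\sum_{m,k}\tilde{B}_{mk}\tilde{\phi}^X_m\tilde{\phi}^Y_k$ yields that the plug-in estimator $\tilde{\beta}(F_{\epsilon,n},F_{X,n},F_{Y,n})$ converges in probability to $\tilde{\beta}(F_{\epsilon},F_X,F_Y)$, which by Lemma \ref{lemma:Fisher} equals $\beta(s,t)$, delivering consistency in the sense of Definition \ref{def:consistency}.

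The most delicate step will be the joint continuity argument. The cited references supply continuity of each component functional individually, but here the residual distribution is itself estimated from residuals that depend on the preceding FPCA and regression fits, so care is needed to avoid circularity. I would handle this by treating the three functionals sequentially: first showing convergence of $\tilde{\phi}^X_m$ and $\tilde{\phi}^Y_k$ from Glivenko--Cantelli applied to $F_X,F_Y$, then using these consistent bases to construct residuals whose empirical distribution converges to $F_{\epsilon}$, and finally invoking continuity of MLTS at the non-degenerate population distribution where conditions C3--C4 rule out pathological behaviour. Uniqueness of the population bases (under simple-eigenvalue assumptions implicit in C1--C2) is what allows this sequential argument to go through without identifiability issues.
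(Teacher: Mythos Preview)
Your proposal is correct and follows essentially the same approach as the paper: reduce to finite-dimensional score vectors via C1--C2, invoke Glivenko--Cantelli for the empirical distributions, and combine Remark~\ref{remark:consist} with the Fisher-consistency of Lemma~\ref{lemma:Fisher}. The paper's argument is considerably more terse---it simply states that finite-dimensionality plus Remark~\ref{remark:consist} plus Lemma~\ref{lemma:Fisher} suffice---whereas you spell out the continuity of the component functionals and the sequential handling of the residual distribution, which the paper leaves implicit in its appeal to Remark~\ref{remark:consist}.
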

Note that $x_i(t)$ and $y_i(t)$ are defined on a finite number of eigenfunctions, so are defined by finite score vectors. Therefore Corollary \ref{corollary:beta} follows from Lemma \ref{lemma:Fisher} and Remark \ref{remark:consist}, which states almost sure convergence of the empirical distribution for iid multivariate random variables. In this case Fisher-consistency is equivalent to consistency.   

%Alternatively, we can use the estimator produced by multivariate LTS \cite{Agullo2008}, which has been proven to be Fisher-consistent, but not consistent. We will need the consistency of the estimation of $B$ to prove Lemma \ref{lemma:consist}, so for now we will keep using the MM-estimator.
 
%Conditions C1 and C2 ensures we have a finite dimensional matrix, which with C5 implies consistency of MM-estimators \cite{Kudraszow2011}. 

\subsection{Consistency of RBIC}

% define BIC for FLR

%In this section we will derive a BIC procedure for the FLR problem using the approach by Matsui \cite{matsui2017}. Matsui looks at quadratic FLR, which he models using a penalised likelihood, we are interested in a simpler model without quadratic terms or penalisation. 
We defined RBIC for the FLR problem in Section \ref{sec:RBIC}. In this section we will prove consistency of RBIC for the FLR problem. We will assume there is a true model, which we previously defined as $(\truemodel)$. We can then define overspecified and underspecified models in reference to this true model. We make some assumptions on the behaviour of the likelihood for these two model classes to prove consistency. We also denoted $(\RBICmodel)_n=\min_{(\model)\in J}RBIC_n(\model)$, which we will assume is unique. 

We will split the candidate models in $J$ into two sets, one is the overspecified models that include the true model $J_+=\{(\model)\in J|M\geq M_0 \text{ and } K\geq K_0\}$ and underspecified models $J_-=J_+^c\cap J$. Recall that $r=[\alpha n]$ for some $\alpha\in (0,1)$, and the likelihood $\tilde{l}$ in \eqref{eq:rob_like} depends on $r$ terms. 

%This splitting of the models is also used by Yanagihara et al. \cite{yanagihara2015}.

\textbf{Assumption 1}
For $(\model)\in J_-$, there exists some $\varepsilon^{\model}>0$ such that
\begin{equation*}
\lim_{n\rightarrow\infty}P\left[\frac{1}{r}(\tilde{l}(\tilde{\theta}^{\truemodel},S^{\truemodel})-\tilde{l}(\tilde{\theta}^{\model},S^{\model}) )>\varepsilon^{\model}\right]=1.
\end{equation*}  
This is a reasonable assumption as the underspecified models should give a poorer fit to $y_i$ than the true model. 

\textbf{Assumption 2}
For $(\model)\in J_+$, there exists some $\gamma^{\model}>0$ such that 

\begin{equation*}
\lim_{n\rightarrow\infty}P\left[\tilde{l}(\tilde{\theta}^{\model},S^{\model})-\tilde{l}(\tilde{\theta}^{\truemodel},S^{\truemodel})>\gamma^{\model}\right]=0.
\end{equation*}
This assumption states that the difference in the trimmed loglikelihood is less than a finite $\gamma$. The likelihood for the overspecified models and the true model should be close, given the true model is contained within the overspecified models, so the difference in the penalty terms will outweigh the difference in the likelihoods for large enough $n$. 

Note that in Assumption 1 we consider the average difference between the log-likelihoods, whereas in Assumption 2 we look at the total difference.
%Assumption 5 states the difference of the trimmed likelihood is stochastically bounded. 

%The consistency of this procedure follows from the consistency of BIC in Theorem \ref{theorem:consistency_BIC}. 

 \begin{theorem}\label{theorem:consistency_RBIC}
 Given Assumptions 1 and 2 hold, and the true model $(\truemodel)\in J$ then $(\RBICmodel)_n$ is a consistent estimator of $(\truemodel)$.
 \end{theorem}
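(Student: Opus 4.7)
The plan is to reduce the claim that $(\RBICmodel)_n = (\truemodel)$ with probability tending to one to a case-by-case comparison of $RBIC_n(\model)$ against $RBIC_n(\truemodel)$. Since $J$ is finite (with cardinality at most $\maxM\cdot\maxK$), it suffices to show that for every fixed $(\model)\in J\setminus\{(\truemodel)\}$,
\begin{equation*}
\Delta_n(\model):=RBIC_n(\model)-RBIC_n(\truemodel)=-2\bigl[\tilde{l}(\tilde{\theta}^{\model},S^{\model})-\tilde{l}(\tilde{\theta}^{\truemodel},S^{\truemodel})\bigr]+[\omega(\model)-\omega(\truemodel)]\log r
\end{equation*}
is strictly positive with probability tending to one; a union bound over $J$ then yields consistency.

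For the underspecified case $(\model)\in J_-$, I would invoke Assumption 1, which says that with probability tending to one the bracketed loglikelihood gap exceeds $r\varepsilon^{\model}$, so $-2[\,\cdot\,]\ge 2r\varepsilon^{\model}$. The penalty contribution $[\omega(\model)-\omega(\truemodel)]\log r$ is in the worst case $-(M_0K_0+1)\log r$, which is only $O(\log n)$. Since $r=[\alpha n]$ grows linearly in $n$, we obtain $\Delta_n(\model)\ge 2r\varepsilon^{\model}-(M_0K_0+1)\log r\to\infty$, and hence $P(\Delta_n(\model)>0)\to 1$.

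For the overspecified case $(\model)\in J_+\setminus\{(\truemodel)\}$, Assumption 2 bounds the loglikelihood difference in probability: with probability tending to one, $\tilde{l}(\tilde{\theta}^{\model},S^{\model})-\tilde{l}(\tilde{\theta}^{\truemodel},S^{\truemodel})\le\gamma^{\model}$, so the first term contributes at least $-2\gamma^{\model}$. Because $(M,K)\neq(M_0,K_0)$ while $M\ge M_0$ and $K\ge K_0$, we have $MK>M_0K_0$, giving $\omega(\model)-\omega(\truemodel)=MK-M_0K_0\ge 1$ and therefore a penalty gap of at least $\log r$. Hence $\Delta_n(\model)\ge\log r-2\gamma^{\model}\to\infty$, and again $P(\Delta_n(\model)>0)\to 1$.

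The main obstacle I anticipate is conceptual rather than technical: the trimmed log-likelihoods $\tilde{l}(\tilde{\theta}^{\model},S^{\model})$ and $\tilde{l}(\tilde{\theta}^{\truemodel},S^{\truemodel})$ are evaluated on possibly different subsets $S^{\model},S^{\truemodel}$ of size $r$, so the usual BIC consistency arguments that compare loglikelihoods on common data do not apply directly. Assumptions 1 and 2 are designed precisely to sidestep this issue by encoding the required comparisons at the level of the \emph{optimized} trimmed likelihoods, which reduces Theorem~\ref{theorem:consistency_RBIC} to the clean dichotomy between the $O(n)$ likelihood growth for underspecified models and the $O(\log n)$ penalty growth for overspecified ones; the genuinely delicate task, which lies outside the statement, would be verifying Assumptions 1 and 2 from primitive conditions on the distributions of $X$, $Y$ and $\epsilon$.
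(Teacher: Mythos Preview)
Your proposal is correct and follows essentially the same approach as the paper: split the competing models into $J_-$ and $J_+$, use Assumption~1 to show the likelihood gap (of order $r$) dominates the $O(\log r)$ penalty difference for underspecified models, and use Assumption~2 to show the penalty gap (at least $\log r$, since $MK>M_0K_0$) eventually dominates the bounded likelihood difference for overspecified models. Your presentation is in fact slightly more complete than the paper's, since you make the finite union bound over $J$ explicit and spell out why $\omega(\model)-\omega(\truemodel)\ge 1$ in the overspecified case.
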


 \begin{proof}
 
 For $j\in J_-$, we will show 
 
 \begin{equation}\label{eq:check}
 \lim_{n\rightarrow\infty}P(\{RBIC_n(\model)-RBIC_n(\truemodel)\}>0)=1.
 \end{equation}
 By definition we can show that:
 \begin{align*}
  & \lim_{n\rightarrow\infty} P\left(RBIC_n(\model)-RBIC_n(\truemodel)>0\right)\\
  & = \lim_{n\rightarrow\infty} P\left(-2\left(\frac{\tilde{l}(\tilde{\theta}^{\model},S^{\model})-\tilde{l}(\tilde{\theta}^{\truemodel},S^{\truemodel})}{r}\right) > -\frac{(\omega(\model)-\omega(\truemodel))\log(r)}{r}\right).
\end{align*}
We will label $H_r=-2\left(\frac{\tilde{l}(\tilde{\theta}^{\model},S^{\model})-\tilde{l}(\tilde{\theta}^{\truemodel},S^{\truemodel})}{r}\right)$ and $G_r=\frac{(\omega(\model)-\omega(\truemodel))\log(r)}{r}$. Using $\varepsilon^{\model}$ from Assumption 1, we can see that $-G_r<2\varepsilon^{\model}$ for sufficiently large $r$. Using this and Assumption 1 we can show 

\begin{equation*}
\lim_{n\rightarrow\infty}P(H_r>-G_r)\geq \lim_{n\rightarrow\infty}P(H_r>2\varepsilon^{\model})=1.
\end{equation*}
Therefore $\lim_{n\rightarrow\infty} P\left(RBIC_n(\model)-RBIC_n(\truemodel)>0\right) = 1$ for $(\model)\in J_-$. 
 
  For $(\model)\in J_+\backslash\{(\truemodel)\}$, we know that $\frac{1}{2}(\omega(\model)-\omega(\truemodel))\log(r)>0$ and is monotonically increasing. Therefore there exists $N$ such that for $r\geq N$ 
  \begin{equation}\label{eq:u}
  \frac{1}{2}(w(\model)-w(\truemodel))\log(r)>\gamma^{\model}.
\end{equation}   
We can show that 
  
  \begin{align*}
  & \lim_{n\rightarrow\infty}P\left(RBIC_n(\model)-RBIC_n(\truemodel)<0\right)\\
  &   = \lim_{n\rightarrow\infty}P\left([\tilde{l}(\tilde{\theta}^{\model},S^{\model})-\tilde{l}(\tilde{\theta}^{\truemodel},S^{\truemodel})]>\frac{1}{2}(\omega(\model)-\omega(\truemodel))\log(r)\right)\\
  &   \leq \lim_{n\rightarrow\infty}P\left([\tilde{l}(\tilde{\theta}^{\model},S^{\model})-\tilde{l}(\tilde{\theta}^{\truemodel},S^{\truemodel})]>\gamma^{\model}\right)=0 \text{ by Assumption 2}.
  \end{align*} 
 
 %As we assume model $(\truemodel)\in J$, then model $(\RBICmodel)_n$ is a consistent estimator of $(\truemodel)$.
      
 \end{proof}
Note that BIC is a special case of RBIC where $r=n$, so is also consistent by Theorem \ref{theorem:consistency_RBIC}.

\section{Outlier Detection}\label{sec:Outlier_Detection}

There is a rich literature of outlier detection methods for functional data (FD). There are functional depth based methods such as the thresholding approach by \cite{Bande2008} and the functional boxplot by \cite{Sun2011}. Alternatively we can use methods based on outlyingness measures such as \cite{ArribasGil2014}, and \cite{Dai2018}. For multivariate FD there exist outlier detection methods such as \cite{Rousseeuw2018} and \cite{Hubert2015}. These methods do not model the dependency between the functional response and functional input, and may therefore miss important outliers. This will be shown in the simulation study in Section \ref{sec:simulations}. RFLR can model this dependency structure, which can improve the detection of outliers. We therefore suggest an outlier detection algorithm which uses RFLR to model the dependency structure. Using residuals from the model we can apply standard outlier detection approaches. The outliers in the residuals will be samples that are not well explained by the model which fits the majority of the curves.

%We will therefore introduce an outlier detection procedure using the robust FLR model estimated in Algorithm \ref{alg:Robust_FLR}. 

The RFLR model produces estimates of the responses $\tilde{y}_i(t)=\tilde{z}_i\tilde{B}\tilde{\phi}^Y(t)$ for $i=1,...,n$. For an outlier we expect the residual curve $r_i(t) = y_i(t)-\tilde{y}_i(t)$ to deviate in behaviour from the other residuals. Traditionally, we would use the integrated square error to identify outliers. However using a functional depth approach \citep{Bande2008} is more effective in identifying outliers in functional data, in particular shape outliers that are not unusual if viewed at each time point but are abnormal across the entire trajectory. The approach assigns a depth value to samples $r_i(t)$. Samples with small depth values lie far away from the other samples. %Samples with small depth value lie far away from the ``centre''. 

We will use the $h$-modal depth \citep{Cuevas2007} to rank samples $r_i$. For a given kernel $G_h$ (typically Gaussian with bandwidth $h$), the $h$-modal depth of $r_i$ with respect to $r=\{r_1,...,r_n\}$ is given by: 

\begin{equation}\label{eq:modal}
D(r_i|r,h) = E(G_h(||r_i-r||)) \approx \frac{1}{n}\sum_{l=1}^n G\left(\frac{||r_i-r_l||}{h}\right).
\end{equation} 
The $h$-modal depth has two useful properties. First, it uses a distance metric therefore samples further away from the centre will be given a smaller depth value. Second, in the case of multiple ``normal'' types behaviour, the $h$-modal depth works effectively as it doesn't assume there is one centre. 

 %It was theoretically proven by Nagy \cite{Nagy2015} that using this estimate for $h$, enables the $h$-modal depth to converge to the true depth.

In the algorithm we need to choose the bandwidth $h$ and a threshold $C$ to identify outliers. The bandwidth $h$ is taken to be the 15th percentile of the empirical distribution of $\{||r_i-r_j||, i,j=1,...,n\}$ \citep{Bande2008}. The threshold $C$ is chosen such that $P(D(r_i|r,h)\leq C) = \delta$, where $\delta$ is a pre-chosen percentile. To estimate the threshold $C$ they use a bootstrapping approach, which estimates a value of $C$ for different random sets of samples and then aggregates these estimates. We describe the outlier detection algorithm in Algorithm \ref{alg:OD}. 

%To estimate the threshold $C$ they use suggest a pooled and a trimmed bootstrapping approach. The pooled approach takes random samples from all the samples, whereas the trimmed approach uses a certain percentage of samples with the largest depth values. They estimate a value of $C$ for different random sets of samples and then aggregates these estimates. We describe the outlier detection algorithm in Algorithm \ref{alg:OD}. %The cutoff $C$ is calculated once, meaning the same $C$ is used in each iteration.

%The cutoff $C$ is calculated once, meaning the same $C$ is used in each iteration.

 %\vspace{6pt}
 
% $(i=1,...,n)$
% $(k=1,...,M^*)$
 \begin{algorithm}
 \KwData{Centred curves $\{x_i(t),y_i(t)\}$ for $i=1,...,n$ and percentile $\delta$.}% measured at times $t_1,...,t_N$, robust FPCs $\tilde{\phi}^Y_k(t)$, scores $\tilde{z}_k$  and regression matrix $\tilde{B}$ from Algorithm \ref{alg:Robust_FLR}.}
 \begin{algorithmic}
 %\KwResult{List of outliers}
  \STATE 1. Use Algorithm \ref{alg:Robust_FLR} to obtain $\tilde{\phi}^Y_k(t)$, $\tilde{z}_m$ and $\tilde{B}$.
  %\STATE 2. Take $\tilde{\phi}^Y= (\tilde{\phi}^Y(t_1),...,\tilde{\phi}^Y(t_N))$ 
  \STATE 2. Calculate residual curves $r_i(t)$.
 \STATE 3. Estimate bandwidth $h$.
 \STATE 4. For each $r_i(t)$ calculate $D(r_i|r,h)$.
 \STATE 5. Estimate $C$ for given percentile $\delta$.      
 \STATE 6. If $D(r_i|r,h)<C$ sample $i$ is an outlier.
\end{algorithmic}
 \caption{Outlier Detection using robust FLR.}
 \label{alg:OD}
\end{algorithm}

\section{Simulation Study}
\label{sec:simulations}

In this section we will provide a simulation study to investigate the finite sample properties of RBIC and robust FLR (RFLR) in comparison to BIC and classical FLR (CFLR). In the simulation study we will generate data using a FLR process and corrupt a certain number of samples, which will be the outliers. The outliers have been designed to be undetectable, if the response curves are considered independently of the predictor curves. Therefore standard functional data outlier detection algorithms such as those discussed in Section \ref{sec:Outlier_Detection} will perform poorly. 

%We have chosen such outliers to mimic the outliers we aim to detect in the jet engine scenario in Section \ref{sec:engine}. 

%The purpose of using model selection procedures such as BIC is to obtain a model which gives good model fits. We have postulated that BIC can select poor models in the presence of outliers and therefore suggested a robust alternative in Section \ref{sec:RBIC}. To illustrate this behaviour we have selected RFLR models using BIC and RBIC and calculated the fitting error with respect to the non-outliers. 

The main motivation for the RFLR model is to obtain good model fitting in the presence of outliers. In this simulation study we compare the fitting error (FE) given in \eqref{eq:FE_non}, for the non-outlier samples using the robust model, which uses RFLR and RBIC with the classical approach using CFLR and BIC. We define the indicator variable $u_i=1$ if sample $i$ is an outlier and 0 otherwise. Letting $\hat{y}_i(t)$ be the estimation of $y_i(t)$ and given that proportion $a$ of the samples have been contaminated then FE is given by:

\begin{equation}\label{eq:FE_non}
FE = \frac{1}{(1-a)n}\sum_{i=1}^n(1-u_i)||y_i-\hat{y}_i||^2.
\end{equation}

Next we compare the outlier detection capabilities of robust and classical approaches using the receiver operating characteristic (ROC) curve to determine the sensitivity/specificity trade-off for different thresholds. If we have perfect outlier detection for all thresholds then the area under the curve (AUC) of the ROC curve would be 1. We can therefore use the AUC value as a measure of outlier detection accuracy regardless of threshold.   

%We will therefore use percentiles proportion to the number of outliers.  

%We fit a B-spline basis to each of the predictor and response curves. We have chosen to use 200 cubic B-spline basis functions to represent the curves and the classical and robust functional principal components. 

FPCA is performed by taking the principal components of a 200 cubic B-spline representation of each of the predictor and response curves \citep{Ramsay2005}. The robust FPCA approach outlined in Section \ref{sec:Robust_FLR} is performed using the CR algorithm proposed by \cite{Croux1996} on the same B-spline coefficients. The MLTS estimator is calculated using the heuristic given by \cite{Agullo2008} using different trimming proportions $(1-\alpha)$ for $\alpha\in [0,1]$.

\subsection{Scenarios}

%Engine temperature has a delayed response to changes in engine speed, we therefore want to mimic this short term delayed effect using a short memory process.     
%In the simulation study we want to create outliers in the functional response. 

%We want to mimic the behaviour in the jet engine data, in particular we want to incorporate variability in the predictor curves. We want to also make the outliers subtle, ensuring they are masked the variability from the predictor curves. We also want to measure the performance of the model in the presence of global and isolated outliers.

We will generate samples $x(t)$ using a FPCA based model with mean function $\mu_X(t)= -10(t-0.5)^2 + 2$ for $t\in[0,1]$ and eigenfunctions:

\begin{equation*}
  \phi^X_1=\sqrt{2}\sin(\pi t),\hspace{0.5em} \phi^X_2=\sqrt{2}\sin(7\pi t),\hspace{0.5em} \phi^X_3=\sqrt{2}\cos(7\pi t).
\end{equation*}   
The principal scores are sampled from Gaussian distributions with mean 0 and variances 40, 10 and 1 for the eigenfunctions respectively. Note that we do not create any outliers in the FPCA decompositions of the predictor curves.  We generate 400 predictor curves $x_1(t),...,x_{400}(t)$, which are observed at $T=500$ equidistant points in the interval $[0,1]$. 

% explain details about robust FPCA
The samples $y(t)$ will have eigenfunctions: 

\begin{equation*}
  \phi^Y_1=\sqrt{2}\sin(12\pi t),\hspace{0.5em}  \phi^Y_2=\sqrt{2}\sin(5\pi t),\hspace{0.5em}  \phi^Y_3=\sqrt{2}\cos(2\pi t),
\end{equation*}   
and mean function $\mu_Y(t)=60\exp(-(t-1)^2)$. We will generate $\beta(s,t)=\phi^X(s)^{T}B\phi^Y(t)$ where $B$ will have random entries between $[-3,3]$. We generate non-outlier curves:

\begin{equation*}
y_i(t)=\mu_Y(t)+\int_I\beta(s,t)(x_i(s)-\mu_X(s))ds +\epsilon_i(t),
\end{equation*}
where the residual function $\epsilon_i(t)=q_i\phi^Y(t)+d_i$ where $q_i$ and $d_i$ are sampled iid from $N(0,0.1)$. We will consider three cases when the proportion of outliers are $a=0.1,0.2$ and $0.3$.  

%These outliers are global outliers as the eigenfunctions have support over all $[0,1]$. 

In \textbf{Scenario 1} outliers will be generated by replacing $B$ with $B_1=B+R$ where $R$ has random entries sampled from $N(0,0.5)$ giving $\beta_1(s,t)=\phi^X(s)^{T}B_1\phi^Y(t)$. Outliers $y'_i(t)$ are given by

\begin{equation*}
y'_i(t)=\mu_Y(t)+\int_I\beta_1(s,t)(x_i(s)-\mu_X(s))ds +\epsilon_i(t).
\end{equation*}

In \textbf{Scenario 2} we generate outliers by adding a random B-spline function $p(t)$ defined on an interval of length $1/10$. Letting $\beta_2(s,t)=\phi^X(s)^{T}B_2[\phi^Y(t), p(t)]$, for $3\times 4$ matrix $B_2=[B, l]$ for $l\sim N(2,1)$, then the outliers $y''_i(t)$ are given by

\begin{equation*}
y''_i(t)=\mu_Y(t)+\int_I\beta_2(s,t)(x_i(s)-\mu_X(s))ds +\epsilon_i(t).
\end{equation*} 
Note that the outliers in Scenario 1 affect the regression function across the entire interval whereas the outliers in Scenario 2 only affect a small interval of the curves.

In Figure \ref{fig:scenario_plots} we have a plot of the predictor curves $x_i(t)$ and response curves $y_i(t)$ with outliers from Scenario 1 and Scenario 2. The figure shows the outliers are masked by the variability in the curves and therefore cannot by identified using standard outlier detection algorithms. To make the outliers clearer we have plotted the residuals of the response curves using the true regression function and mean functions. In the bottom row of Figure \ref{fig:scenario_plots} we can see that the outliers in Scenario 2 are localised to a fixed interval whereas in Scenario 1 the outliers affect the response curve at all time points.  %For the simulated data in Figure \ref{fig:scenario_plots} the max error for the non-outliers is 10.05 whereas the minimum error for the outliers in Scenario 1 is 151.2 and for Scenario 2 it is 11.7. 

The RFLR model depends on the proportion of trimming $\alpha$. To investigate the effect of the trimming we will consider trimming proportions $\alpha=0.1,0.2$ and $0.3$. We shall also investigate the performance using BIC and RBIC with fixed trimmed sample size of $r=[0.8n]$. 

%For $\alpha=0$ we are applying least squares using robust FPCA estimates. Using $\alpha=0.5$ the MLTS estimator is robust to contamination of $50\%$ of the samples, but uses only a subset of the data and is therefore sacrificing efficiency. 

We sample 400 predictor and response curve datasets and generate classical and robust models to calculate the average FE \eqref{eq:FE_non}. In Tables \ref{table:scenario1_BIC} and \ref{table:scenario2_BIC} we present the results for Scenario 1 and 2 respectively. The CFLR model gives a smaller FE value in the case of no-outliers $a=0$, however the robust model still gives good model fits. If we compare the FE using BIC and RBIC, we can see that BIC gives better model choices when $a=0$. This is due to BIC using all the data and in particular using samples in the tails of the distribution. In the presence of outliers the robust model outperforms the classical model, and as expected the difference in FE increases as the number of outliers increases. We should also note that RBIC is giving better model choices than BIC when outliers are present. Next, we can see using trimming proportion $\alpha=0.1$ we obtain significantly large FE values when $a=0.3$. However the FE values for $\alpha=0.2$ and $0.3$ are very similar in the case of $a=0.3$. The outliers generated can have different sizes, therefore in the $\alpha=0.2$ robust model only small outliers are present, which only affect the model fitting slightly . %In the case of $a=0.3$, the FE values of the robust model is notably larger than for $a=0.2$, due to the outliers being present in the RBIC estimates. 

In Figure \ref{fig:ROC_plots} we have two ROC curves generated for one of the repetitions in Scenario 1 and 2 in which we have contaminated $20\%$ of the samples. In both scenarios the robust model outperforms the classical model. We also deploy the approach of \citet{Bande2008} to the response curves, disregarding the predictor curves (henceforth called the Direct approach). The ROC curves show that the robust and classical models are more effective than the Direct method in identifying the outliers in Scenario 1 and 2. By only using the specificity and sensitivity for a fixed threshold a lot of information is being lost, therefore a better comparison would be the area under the curve (AUC). Using the AUC metric we can understand the model outlier detection capabilities overall, in particular how well are the outliers separated from the other samples. We have taken the average AUC values over the 100 iterations performed for Scenario 1, which are shown in Table \ref{table:scenario1_AUC}. We have considered the average AUC values for trimming levels $\alpha=0.1,0.2$ and $0.3$. The robust models give larger AUC values than the classical model. However the different trimming levels do not seem to have a significant effect on the AUC values. In Scenario 2 we have the results in Table \ref{table:scenario2_AUC}. The same patterns appear as in Scenario 1 except the AUC values are notably smaller. This is to be expected given the outliers in Scenario 2 are defined on a small time interval.  

%The robust model has similar AUC values to the classical model, when there are a small number of outliers. However when there are a large number of outliers the robust model significantly outperforms the classical model. In Scenario 2 the AUC values given in Table \ref{table:scenario2_AUC} show better performance by the robust model than the classical model for various number of outliers. Again the AUC values are smaller than those in Scenario 1.  
  
% The results for Scenario 2 can be seen in Table \ref{table:scenario2_sens}. In Scenario 2 we have created outliers that only affect a small interval of the curve which is notably harder to detect than the outliers in Scenario 1. We can see that the sensitivity of the two models is relatively high, however the specificity is notable lower than in Scenario 1. The RFLR model has higher sensitivity values for different number of outlier and give significantly larger specificity values than CFLR for large number of outliers.

\begin{figure}
\centering     %%% not \center
\subfigure[$x_i(t)$]{\label{fig:predictor1}\includegraphics[width=6cm]{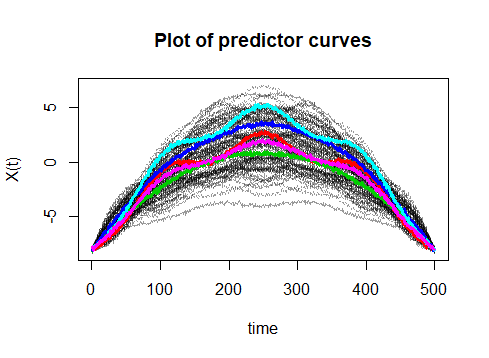}}
\subfigure[$x_i(t)$]{\label{fig:predictor2}\includegraphics[width=6cm]{Predictor_curves_col.png}}
\subfigure[$y^{(1)}_i(t)$]{\label{fig:response1}\includegraphics[width=6cm]{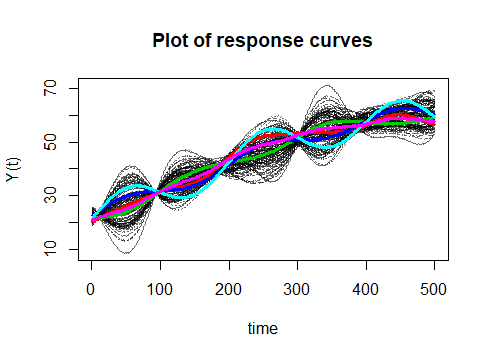}}
\subfigure[$y^{(2)}_i(t)$]{\label{fig:response2}\includegraphics[width=6cm]{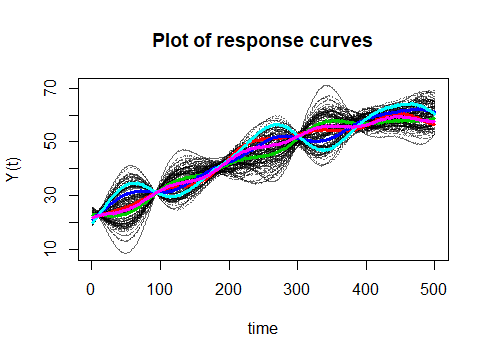}}
\subfigure[$r^{(1)}_i(t)$]{\label{fig:residual1}\includegraphics[width=6cm]{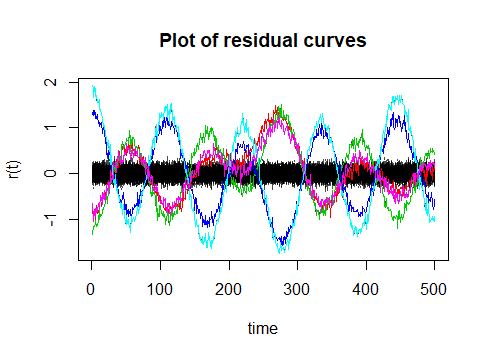}}
\subfigure[$r^{(2)}_i(t)$]{\label{fig:residual2}\includegraphics[width=6cm]{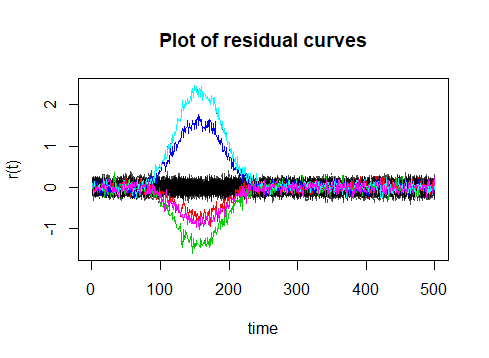}}
\caption{\textit{Left:} Plots of the predictor curves $x_i(t)$, response curves $y^{(1)}_i(t)$ and residuals curves $r^{(1)}_i(t)$ for Scenario 1. \textit{Right:} Plots of the predictor curves $x_i(t)$, response curves $y^{(2)}_i(t)$ and residuals curves $r^{(2)}_i(t)$ for Scenario 2. The residual curves are generated using the true regression function and mean functions. In each scenario there are 5 outliers each in a distinctive colour. The predictors curves $x_i(t)$ are identical for both scenarios, and the response curves look very similar due to mean and functional components masking the outliers. However the residuals are clearly distinctive. }\label{fig:scenario_plots}
\end{figure}

%%%%%%%%%%%%%%%%%%%%%%%%%%%%%%%%%%%%%%%%%%%%%%%%%%%%%%%%%%%%%%%%%%%%%%%%%%%%%%%%%%%%%%%%%%%%%%%%%%%%%%%%
\begin{table}
\begin{center}
  \caption{ \quad Average fitting errors (FE) for 100 replications for Scenario 1, using classic FPCA and robust FPCA with different amount of trimming in the MLTS estimator and using models selected by BIC and RBIC.}
        \begin{tabular}{lllllll}\toprule
            & Trim & Model & a=0 & a=0.1 & a=0.2 & a=0.3 \\\midrule
            Classic & $\alpha=0.0$ & BIC & \textbf{5.326} & 18.441 & 48.771 & 101.320 \\\midrule
            Robust %&  $\alpha=0.0$ & RBIC & 18.357 & 17.980 & 21.612 & 29.589 & 64.332 & 121.649 \\
                  &  $\alpha=0.1$ & BIC & 8.283 & 14.166 & 21.118 & 33.907 \\
                  &  $\alpha=0.1$ & RBIC & 9.285 & \textbf{9.179} & 10.674 & 28.393 \\
                  &  $\alpha=0.2$ & BIC & 8.288 & 14.178 & 15.750 & 16.623 \\
                  &  $\alpha=0.2$ & RBIC & 9.292 & 9.207 & \textbf{9.535} & 13.436 \\
                  &  $\alpha=0.3$ & BIC & 8.294 & 14.199 & 15.815 & 16.518 \\
                  &  $\alpha=0.3$ & RBIC & 9.301 & 9.214 & 9.544 & \textbf{12.334} \\
            \bottomrule
        \end{tabular}
        \label{table:scenario1_BIC}
		\end{center}
    \end{table} 

\begin{table}
\begin{center}
  \caption{ \quad Average fitting errors (FE) for 100 replications for Scenario 2, using classic FPCA and robust FPCA with different amount of trimming in the MLTS estimator and using models selected by BIC and RBIC.}
        \begin{tabular}{lllllll}\toprule
            & Trim & Model & a=0 & a=0.1 & a=0.2 & a=0.3 \\\midrule
            Classic & $\alpha=0.0$ & BIC & \textbf{5.326} & 17.252 & 48.906 & 85.063 \\\midrule
            Robust %&  $\alpha=0.0$ & RBIC & 18.357 & 17.980 & 21.612 & 29.589 & 64.332 & 121.649 \\
                  &  $\alpha=0.1$ & BIC & 8.283 & 15.242 & 21.524 & 28.758 \\
                  &  $\alpha=0.1$ & RBIC & 9.285 & \textbf{9.074} & 9.919 & 18.546 \\
                  &  $\alpha=0.2$ & BIC & 8.288 & 16.745 & 20.652 & 21.928 \\
                  &  $\alpha=0.2$ & RBIC & 9.292 & 9.191 & \textbf{8.997} & 13.628 \\
                  &  $\alpha=0.3$ & BIC & 8.294 & 16.808 & 20.695 & 21.750 \\
                  &  $\alpha=0.3$ & RBIC & 9.301 & 9.233 & 9.018 &  \textbf{11.439} \\
            \bottomrule
        \end{tabular}
        \label{table:scenario2_BIC}
		\end{center}
    \end{table} 
    
    \begin{figure}
\centering     %%% not \center
\subfigure[Scenario 1]{\label{fig:ROC1}\includegraphics[width=7cm]{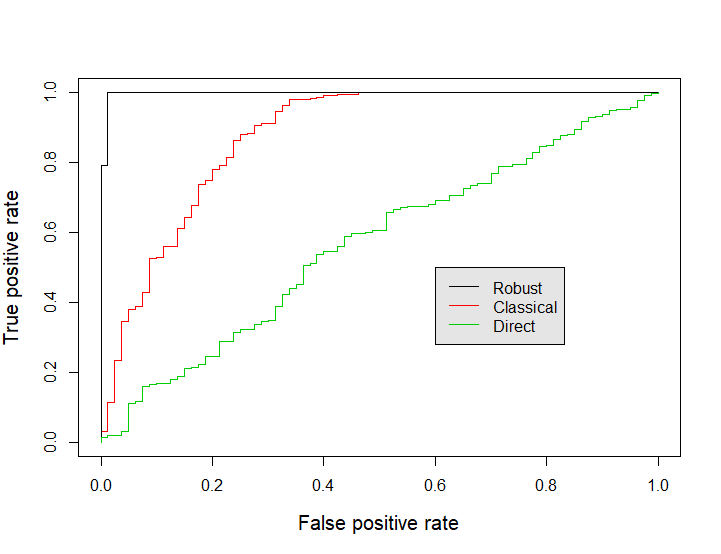}} % ROC_scenario1 Robust_beats_classical_ROC120_alt_y
\subfigure[Scenario 2]{\label{fig:ROC2}\includegraphics[width=7cm]{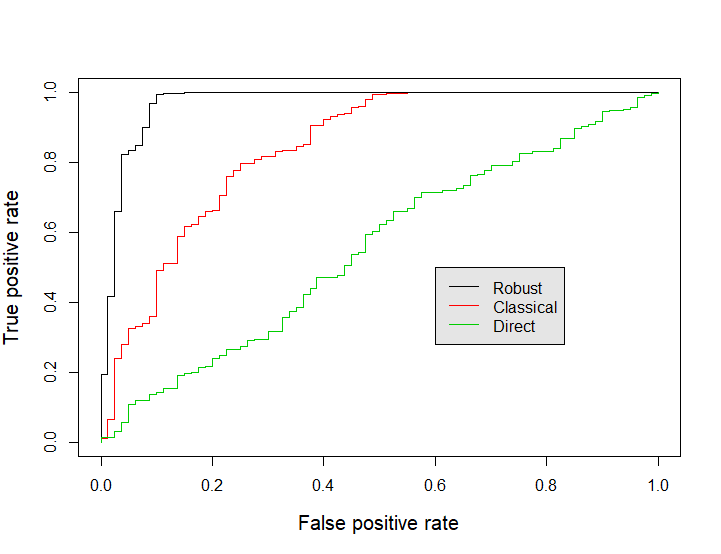}} % ROC_scenario2 Robust_beats_classical_ROC220_alt_y
\caption{ROC curve for one instance of Scenario 1 and 2 with the proportion of outlier a$=0.2$ and proportion trimmed $\alpha=0.2$.}\label{fig:ROC_plots}
\end{figure}
    
    \begin{table}
\begin{center}
  \caption{ \quad Average AUC values over 100 replications for Scenario 1, using proportion of outliers a$=0.1,0.2$ and 0.3. Using Direct compared to classic FPCA with BIC, and using robust FPCA with RBIC and trimming levels $\alpha=0.1,0.2$ and 0.3.}
\begin{tabular}{@{}l *{13}{D{.}{.}{4}}@{}}
\multicolumn{4}{l}{} & & \\
\toprule
\multicolumn{1}{l}{} & \multicolumn{2}{c}{} & \multicolumn{2}{c}{a=0.1} & \multicolumn{2}{c}{a=0.2} & \multicolumn{2}{c}{a=0.3} \\
\midrule
\multicolumn{1}{l}{Direct} & \multicolumn{2}{c}{-} & \multicolumn{2}{c}{0.532} & \multicolumn{2}{c}{0.538} & \multicolumn{2}{c}{0.550}\\
%\multicolumn{1}{l}{Direct} & \multicolumn{2}{c}{-} & \multicolumn{2}{c}{0.512} & \multicolumn{2}{c}{0.548} & \multicolumn{2}{c}{0.554} \\
\midrule
\multicolumn{1}{l}{Classic}  & \multicolumn{2}{c}{$\alpha=0.0$} & \multicolumn{2}{c}{0.960} & \multicolumn{2}{c}{0.898} & \multicolumn{2}{c}{0.797}\\
\midrule
%\multicolumn{1}{l}{Robust} & \multicolumn{2}{c}{$\alpha=0.0$} & \multicolumn{2}{c}{0.965} & \multicolumn{2}{c}{0.975} & \multicolumn{2}{c}{0.960} & \multicolumn{2}{c}{0.897} & \multicolumn{2}{c}{0.794}\\
\multicolumn{1}{l}{Robust} & \multicolumn{2}{c}{$\alpha=0.1$} & \multicolumn{2}{c}{0.995} & \multicolumn{2}{c}{0.991} & \multicolumn{2}{c}{0.953} \\
\multicolumn{1}{l}{} & \multicolumn{2}{c}{$\alpha=0.2$} & \multicolumn{2}{c}{\textbf{0.996}} & \multicolumn{2}{c}{\textbf{0.996}} & \multicolumn{2}{c}{0.987} \\
\multicolumn{1}{l}{} & \multicolumn{2}{c}{$\alpha=0.3$} & \multicolumn{2}{c}{\textbf{0.996}} & \multicolumn{2}{c}{\textbf{0.996}} & \multicolumn{2}{c}{\textbf{0.990}} \\
\bottomrule
\end{tabular} %}
\label{table:scenario1_AUC}
\end{center}
\end{table}

    \begin{table}
\begin{center}
  \caption{ \quad Average AUC values over 100 replications for Scenario 2, using proportion of outliers a$=0.1,0.2$ and 0.3. Using Direct compared to classic FPCA with BIC, and using robust FPCA with RBIC and trimming levels $\alpha=0.1,0.2$ and 0.3.}
\begin{tabular}{@{}l *{13}{D{.}{.}{4}}@{}}
\multicolumn{4}{l}{} & & \\
\toprule
\multicolumn{1}{l}{} & \multicolumn{2}{c}{} & \multicolumn{2}{c}{a=0.1} & \multicolumn{2}{c}{a=0.2} & \multicolumn{2}{c}{a=0.3} \\
\midrule
\multicolumn{1}{l}{Direct} & \multicolumn{2}{c}{-} & \multicolumn{2}{c}{0.512} & \multicolumn{2}{c}{0.548} & \multicolumn{2}{c}{0.554} \\
\midrule
\multicolumn{1}{l}{Classic}  & \multicolumn{2}{c}{$\alpha=0.0$} & \multicolumn{2}{c}{0.922} & \multicolumn{2}{c}{0.838} & \multicolumn{2}{c}{0.734}\\
\midrule
%\multicolumn{1}{l}{Robust} & \multicolumn{2}{c}{$\alpha=0.0$} & \multicolumn{2}{c}{0.965} & \multicolumn{2}{c}{0.975} & \multicolumn{2}{c}{0.960} & \multicolumn{2}{c}{0.897} & \multicolumn{2}{c}{0.794}\\
\multicolumn{1}{l}{Robust} & \multicolumn{2}{c}{$\alpha=0.1$} & \multicolumn{2}{c}{\textbf{0.985}} & \multicolumn{2}{c}{0.964} & \multicolumn{2}{c}{0.932} \\
\multicolumn{1}{l}{} & \multicolumn{2}{c}{$\alpha=0.2$} & \multicolumn{2}{c}{0.980} & \multicolumn{2}{c}{\textbf{0.980}} & \multicolumn{2}{c}{0.966} \\
\multicolumn{1}{l}{} & \multicolumn{2}{c}{$\alpha=0.3$} & \multicolumn{2}{c}{0.980} & \multicolumn{2}{c}{\textbf{0.980}} & \multicolumn{2}{c}{\textbf{0.968}} \\
\bottomrule
\end{tabular} %}
\label{table:scenario2_AUC}
\end{center}
\end{table}

\section{Jet Engine data}
\label{sec:engine}

% beta_plot_tgt
The Jet engine dataset contains sensor measurements taken during 199 Vibration Survey (VS) manoeuvres. This manoeuvre has a distinctive shape with a slow acceleration and a slow deceleration, with examples shown in Figure \ref{fig:wave_engine}. We do not have labels for whether any of the individual engines have outliers but we do have log books from the engine test, from which we can obtain insights into the Vibration Survey manoeuvres which our method flags as outliers. There are a number of temperature features measured within an engine including the TGT, discussed previously. In addition we have four other temperature readings T25, T30, TCAR and TCAF, from sensors measuring temperature in different parts of the engine. All the temperature features are shown in Figure \ref{fig:outliers_temp_all}. The TCAR is particularly interesting as it has two distinct curve behaviours. It is also worth noting that the temperature values are distinctively higher at the end of the manoeuvre than at the beginning even though the engine speeds are the same. This highlights the trajectory-dependent behaviour that we seek to model. The VS manoeuvres time series are of similar length. To standardise we have fitted a B-spline basis of 400 basis functions to each to ensure the time series are well approximated. Then we have taken 1000 equally spaced points on the B-spline representations to be our inputs $x_i(t)$ and $y_i(t)$.   

% pre-processing
We will be applying the outlier detection algorithm described in Algorithm \ref{alg:OD}, which uses RFLR. We will compare these outliers with those detected on the temperature curves directly and using CFLR and BIC in Algorithm \ref{alg:OD}. We can look at the residuals curves to determine if the outliers do indeed look abnormal. In particular we want to show that using functional regression we are able to determine outliers that would otherwise be missed by investigating the temperature curves directly.   

Using the depth based outlier detection (Direct) \citep{Bande2008} directly on the temperature curves (with a default threshold of $\delta=0.01$), we obtain the outliers in Table \ref{table:trent_outliers}. We can see that the outliers in the TPR are the same as the outliers in the temperature features. This suggests the outliers being identified are arising from the controller induced variability. We therefore need to model the dependency between the control feature (TPR) and the temperature features.

We applied the outlier detection algorithm given in Algorithm \ref{alg:OD} using CFLR and BIC with threshold $\delta=0.01$. The outliers identified are given in Table \ref{table:trent_outliers}. The residuals curves are shown in Figure \ref{fig:outliers_temp_class}, with the outliers coloured in blue. It is not clear from this plot that the outliers are truly different from the other data.

%; the outliers identified using the curves directly in green and those detected by both in purple. %Some of the outliers identified are linked to engines that behaved oddly but there also some outliers that come from engine tests with no notable issues.   

Lastly we applied Algorithm \ref{alg:OD} using RFLR and RBIC with threshold $\delta=0.01$. The outlier samples are given in Table \ref{table:trent_outliers} for each temperature feature. In Figure \ref{fig:outliers_temp_more} we have the residual curves using RFLR. We can see that the RFLR model fits the majority of the temperature curves well. The outliers that are picked up clearly look abnormal, with significant deviations from the general behaviour. The RFLR model is therefore able to identify interesting behaviour, which may otherwise have been undetected. Engineers have informed us that Sample 24 comes from an engine in which they detected damaged hardware. All the other outliers in the RFLR column of Table \ref{table:trent_outliers} were also noted to come from engines that displayed odd behaviour during the Pass-Off test. This is not the case for the outliers reported in the CFLR column.

In Figure \ref{fig:outliers_temp_all} we have a plot of the temperature parameters with the outliers identified using the curves directly in green, those using the RFLR model in red and those detected by both in purple. We can see that the outliers from the RFLR model do not necessarily appear as abnormal if we look at the temperature curves directly. Sample 106 is identified as an outlier by multiple temperature features and also when the depth based outlier detection is used on the temperature curves directly. Comparing the outliers identified using a classical approach, we can see Sample 24 is identified as an outlier multiple times using the classical and robust approaches. However most of the outliers from the classical approaches differ with the outliers identified using the robust approach. We can also see that the outliers using the RFLR are significantly more distinctive than the outliers using CFLR. % This highlights that the robust approach is capturing behaviour not being picked up by the classical approach. 

\begin{figure}
\centering     %%% not \center
\subfigure[TPR]{\label{fig:all_out_tpr}\includegraphics[width=6cm]{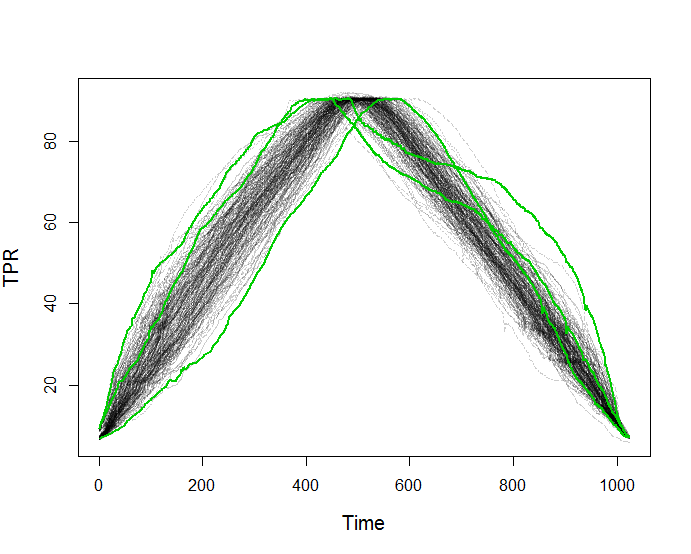}}
\subfigure[T25]{\label{fig:all_out_t25}\includegraphics[width=6cm]{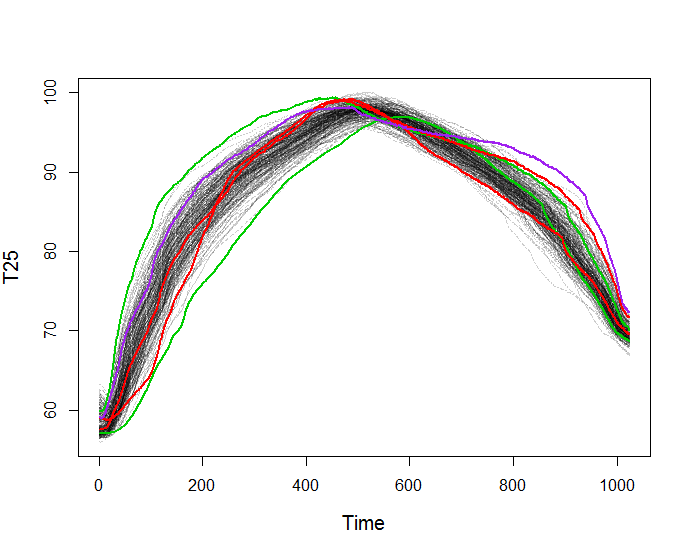}}
\subfigure[T30]{\label{fig:all_out_t30}\includegraphics[width=6cm]{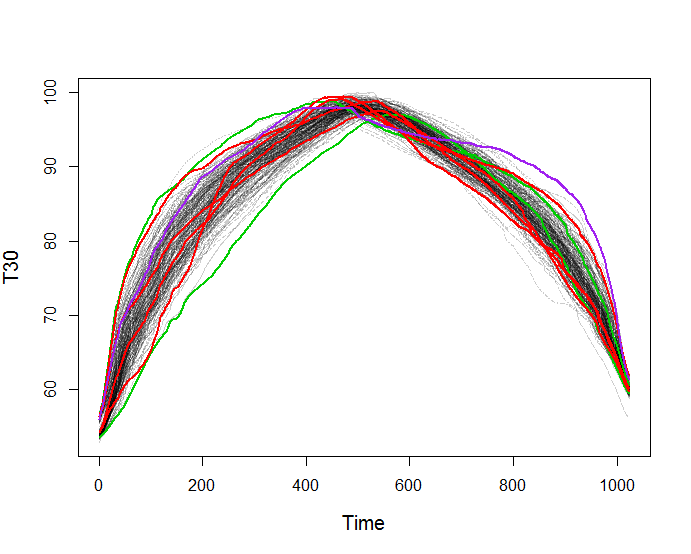}}
\subfigure[TGT]{\label{fig:all_out_tgt}\includegraphics[width=6cm]{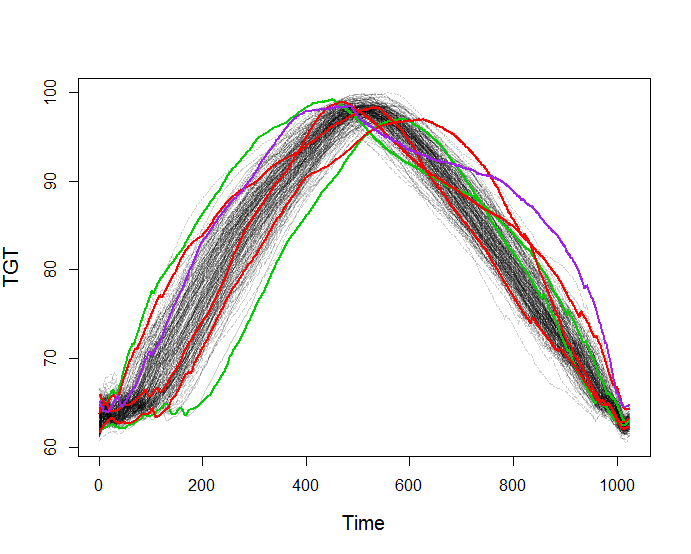}}
\subfigure[TCAR]{\label{fig:all_out_tcar}\includegraphics[width=6cm]{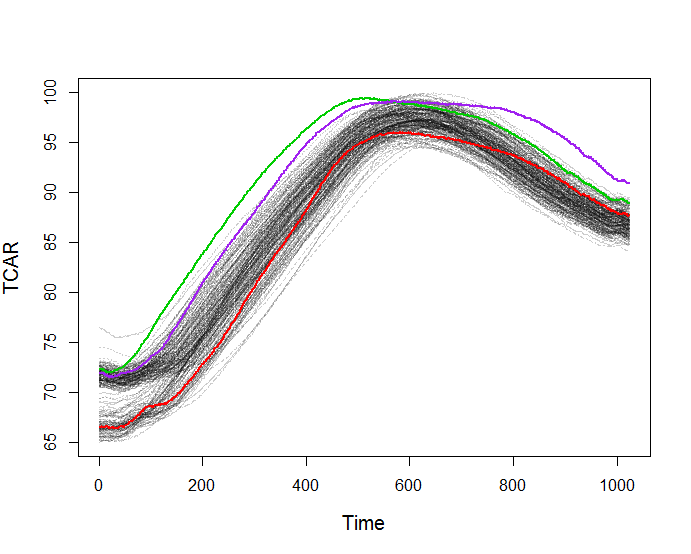}}
\subfigure[TCAF]{\label{fig:out_out_tcaf}\includegraphics[width=6cm]{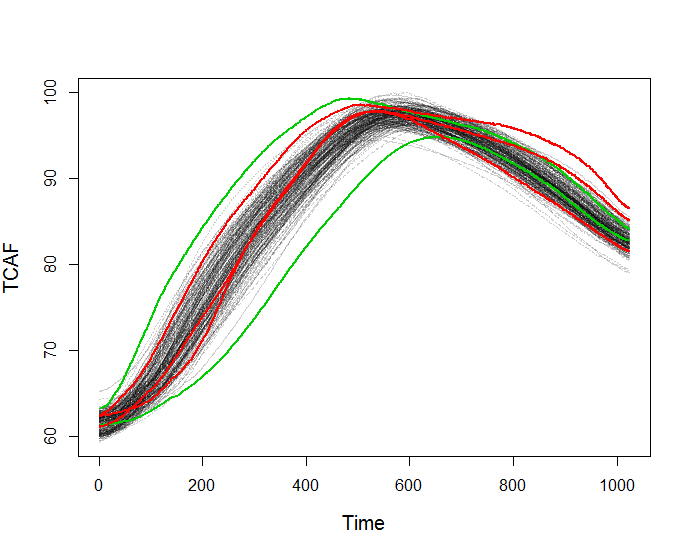}}
\caption{Plots of the TPR, T25, T30, TGT, TCAR and TCAF time series with outliers using robust FLR in red; those using the curves directly in green and those for both in purple.}\label{fig:outliers_temp_all}
\end{figure}

\begin{figure}
\centering     %%% not \center
\subfigure[T25]{\label{fig:all_class_t25}\includegraphics[width=6cm]{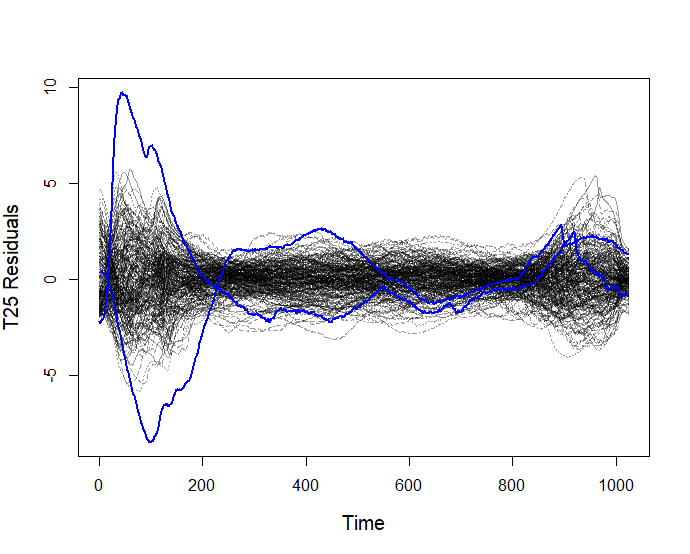}}
\subfigure[T30]{\label{fig:all_class_t30}\includegraphics[width=6cm]{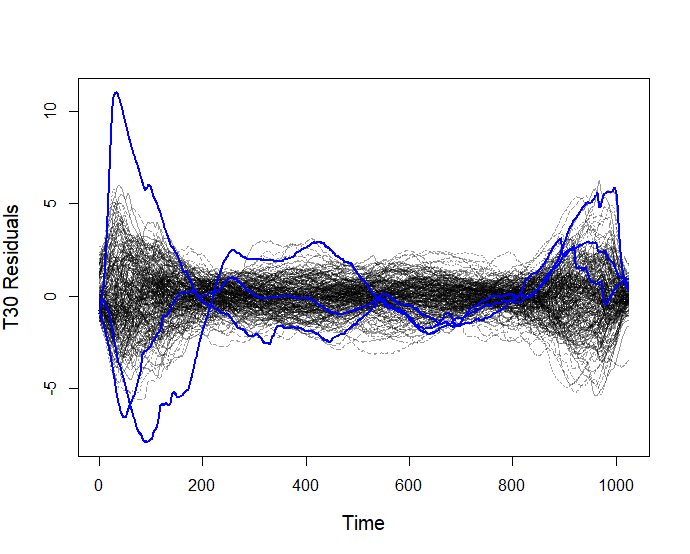}}
\subfigure[TGT]{\label{fig:all_class_tgt}\includegraphics[width=6cm]{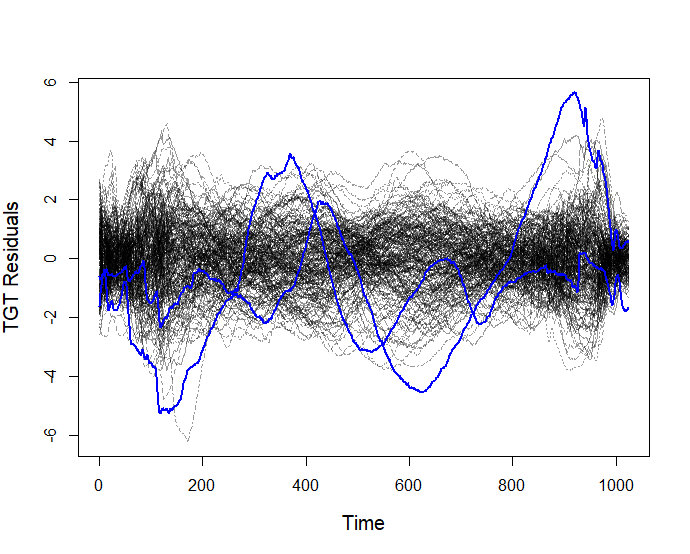}}
\subfigure[TCAR]{\label{fig:all_class_tcar}\includegraphics[width=6cm]{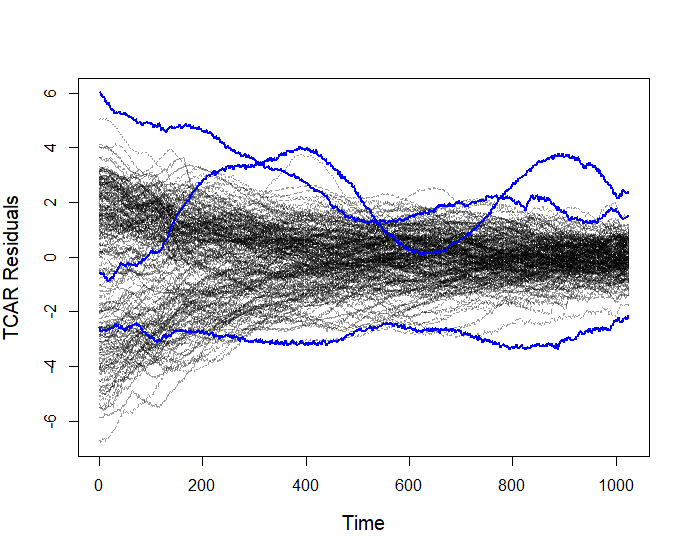}}
\subfigure[TCAF]{\label{fig:all_class_tcaf}\includegraphics[width=6cm]{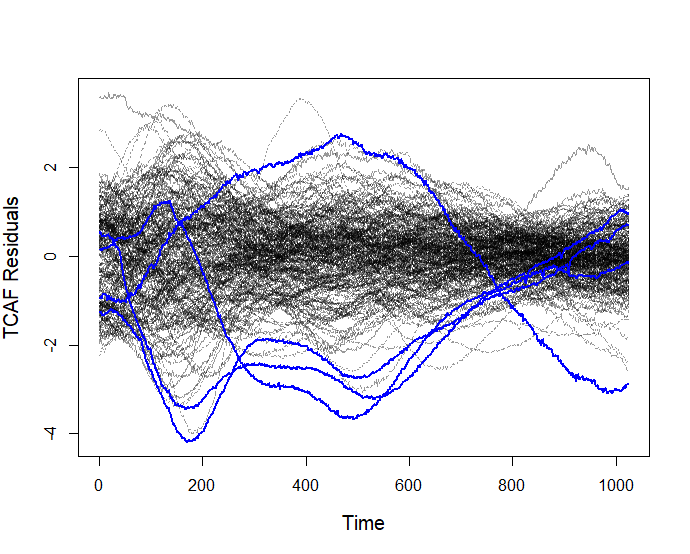}}
\caption{Plots of the residuals of the T25, T30, TGT, TCAR and TCAF with outliers using classical FLR in blue.}\label{fig:outliers_temp_class}
\end{figure}

\begin{figure}
\centering     %%% not \center
\subfigure[T25]{\label{fig:all_more_t25}\includegraphics[width=6cm]{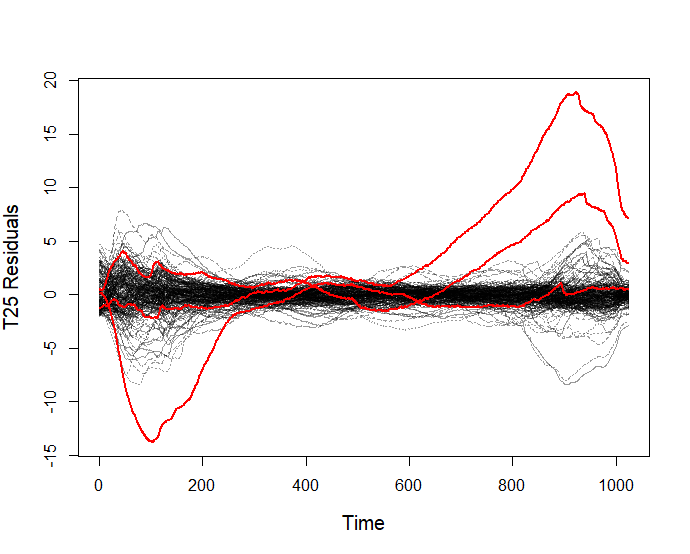}}
\subfigure[T30]{\label{fig:all_more_t30}\includegraphics[width=6cm]{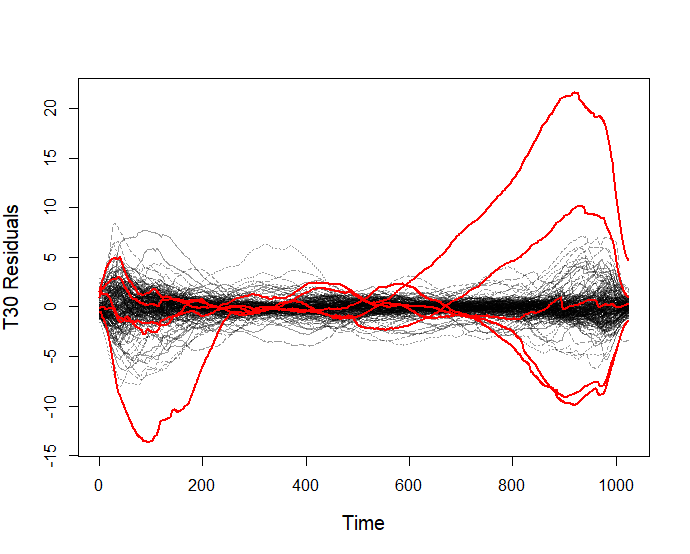}}
\subfigure[TGT]{\label{fig:all_more_tgt}\includegraphics[width=6cm]{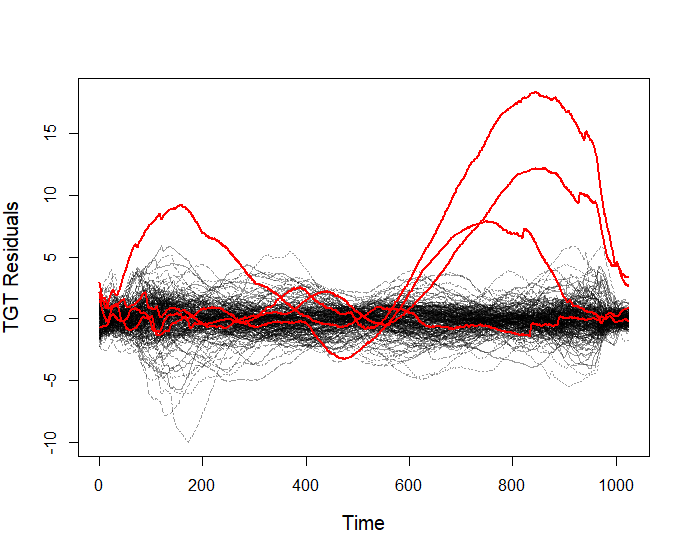}}
\subfigure[TCAR]{\label{fig:all_more_tcar}\includegraphics[width=6cm]{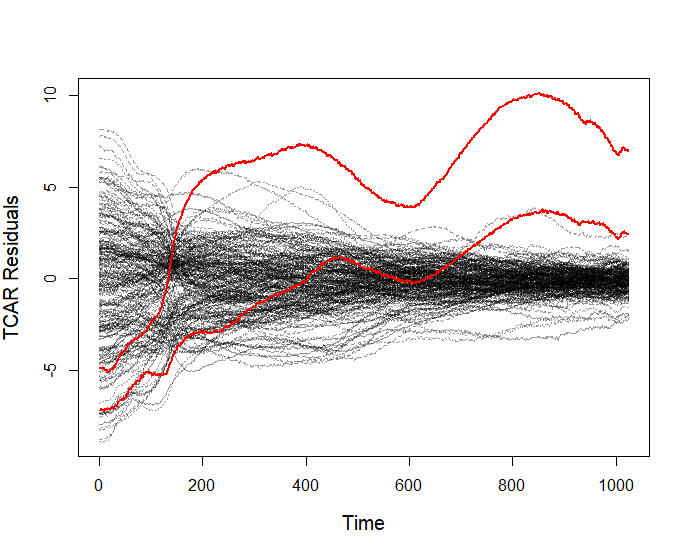}}
\subfigure[TCAF]{\label{fig:all_more_tcaf}\includegraphics[width=6cm]{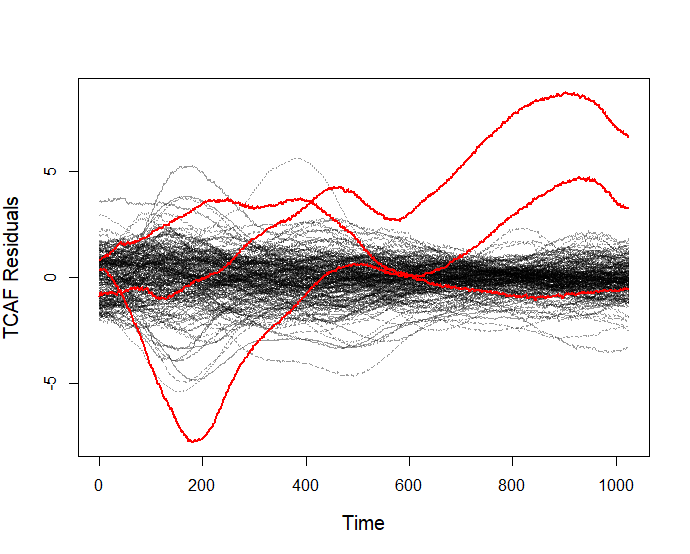}}
\caption{Plots of the residuals of the T25, T30, TGT, TCAR and TCAF with outliers using robust FLR in red.}\label{fig:outliers_temp_more}
\end{figure}

\begin{table}[]
\begin{center}
\begin{tabular}{|l|l|l|l|}
\hline
\textbf{Temp} & \textbf{Direct} & \textbf{CFLR} & \textbf{RFLR}    \\ \hline
TPR           & 33, 106, 167     & -             & -                \\ \hline
T25           & 33, 106, 167     & 24, 182        & 24, 70, 106        \\ \hline
T30           & 33, 106, 167     & 24, 182, 192        & 24, 44, 70, 106, 196 \\ \hline
TGT           & 33, 106, 167     & 119, 153 & 44, 70, 106, 117    \\ \hline
TCAR          & 33, 106         & 36, 91, 106     & 70, 106           \\ \hline
TCAF          & 33, 167         & 65, 167, 170, 171   & 24, 70, 106        \\ \hline
\end{tabular}
\end{center}
\caption{Outliers detected for temperature features (Temp) using outlier detection on the temperature features directly (Direct), and the outliers found using CFLR and RFLR.}
\label{table:trent_outliers}
\end{table}

% a picture of the location of temperature sensors in the jet engine?
% apply methodology to the engine data
% discuss the results 
% are these outliers connected to true outliers?

%class_errors 169.1702 191.2309 209.5896 338.8550 159.8549
%rob_errors 162.0333 178.2470 196.2752 373.3963 130.7241

\section{Conclusion}
\label{sec:conc}

There exist a number of functional regression models for functional inputs and responses, however these methods are not robust to outliers. We have introduced a robust FLR model that is able to produce good model fits in the presence of outliers. Alongside the robust FLR model we have also introduced a robust model selection procedure and proven the consistency of the robust FLR and model selection procedure. Using a simulation study we have shown the need for a robust approach to obtain good models in the presence of outliers. The robust FLR model is also effective in identifying global and localised outliers. Finally using jet engine sensor data as a motivating application for robust FLR we have identified unusual temperature behaviour. In particular the outliers identified in the jet engine sensor data would not have been detected if we modelled the response variables independently of human controlled driving variable.

\bigskip
\begin{center}
{\large\bf SUPPLEMENTARY MATERIAL}
\end{center}

\begin{description}

%\item[Title:] Brief description. (file type)

\item[RobFLR:] A zip file of the R code used for simulation study

\end{description}

\bibliographystyle{Chicago}

\bibliography{Bibliography-MM-MC}
\end{document}